\newcommand{\ceil}[1]{{\left\lceil#1  \right\rceil}}
\newcommand{\comment}[1]{}
\newcommand{\rank}{\mathsf{rank}}
\newcommand{\cA}{{\mathcal{A}}}
\newcommand{\cI}{{\mathcal{I}}}
\newcommand{\cm}{{\mathcal{M}}}
\newcommand{\cN}{{\mathcal{N}}}
\newcommand{\ck}{{\mathcal{K}}}
\newcommand{\cJ}{{\mathcal{J}}}
\newcommand{\bx}{{\bar{x}}}
\newcommand{\by}{{\bar{y}}}
\newcommand{\bz}{{\bar{z}}}
\newcommand{\bc}{{\bar{c}}}
\newcommand{\OPT}{\textnormal{OPT}}
\newcommand{\eps}{{\varepsilon}}
\newcommand{\floor}[1]{\left\lfloor #1 \right\rfloor}
\DeclareMathOperator*{\argmin}{arg\,min}
\begin{document}
	\newtheorem{thm}{Theorem}[section]
	\newtheorem{prop}[thm]{Proposition}
	\newtheorem{assm}[thm]{Assumption}
	\newtheorem{lem}[thm]{Lemma}
	\newtheorem{obs}[thm]{Observation}
	\newtheorem{cor}[thm]{Corollary}
	\newtheorem{lemma}[thm]{Lemma}
	\newtheorem{definition}[thm]{Definition}
	\newtheorem{theorem}[thm]{Theorem}
	\newtheorem{proposition}[thm]{Proposition}
	\newtheorem{observation}[thm]{Observation}
	\newtheorem{claim}[thm]{Claim}
	\newtheorem{example}[thm]{Example}

	\newtheorem{defn}[thm]{Definition}
	\newcommand{\ariel}[1]{{\color{red} (Ariel :#1)}}
	\newcommand{\ilan}[1]{{\color{blue} Ilan: \color{magenta}#1}}

	\def \II   {{\mathcal I}}
	\newcommand{\one}{\mathbbm{1}}

	\def\claimproof{\proof}
	\def\endclaimproof{\hfill$\square$\\}
	\renewcommand\qedsymbol{$\blacksquare$}

		\title{

			An EPTAS for Budgeted Matroid Independent  Set
		}

		\author{Ilan Doron-Arad\thanks{Computer Science Department, 
				Technion, Haifa, Israel. \texttt{idoron-arad@cs.technion.ac.il}}
			\and
			Ariel Kulik\thanks{CISPA Helmholtz Center for Information Security, Saarland Informatics Campus, Germany. \texttt{ariel.kulik@cispa.de}} 
			\and
			Hadas Shachnai\thanks{Computer Science Department, 
				Technion, Haifa, Israel. \texttt{hadas@cs.technion.ac.il}}
		}
		
		\maketitle

			\begin{abstract}
We consider the {\em budgeted matroid independent set} problem. 
The input is a ground set, where each element has a cost and a non-negative profit, along with a matroid over the elements and a budget.
The goal is to select a subset of elements which maximizes the total profit subject to the matroid and budget constraints. 
Several well known special cases, where we have, e.g., a uniform matroid and a budget, or no matroid constraint (i.e., the classic knapsack problem),
admit a fully polynomial-time approximation scheme (FPTAS).
In contrast, already a slight generalization to the {\em multi-budgeted 
matroid independent set} problem has a PTAS
but does not admit an efficient polynomial-time approximation scheme (EPTAS). This implies a PTAS for our problem, which is the best known result prior to this work.

Our main contribution is an EPTAS for the budgeted matroid independent set problem.
A key idea of the scheme is to find a {\em representative set} for the instance, whose cardinality depends solely on $1/\eps$, 
where $\eps >0$ is the accuracy parameter of the scheme.  The representative set is identified via matroid basis minimization, which can be solved by 
a simple greedy algorithm. Our scheme enumerates over subsets of the representative set  and extends each subset using a linear program. The notion of representative sets may be useful in solving other variants of the budgeted matroid independent set problem. 
\end{abstract}

			\section{Introduction}
\label{sec:introduction}

We consider the {\em budgeted matroid independent set (BMI)} problem, defined as follows.
We are given 
a set of elements $E$, a membership oracle for a collection of independent sets $\II \subseteq 2^{E}$, where $(E, \II)$ is a matroid, a budget $\beta>0$, a cost function~$c:E\to [0,\beta]$, and a profit function $p:E\rightarrow \mathbb{R}_+$. A {\em solution} for the problem is an independent set $X\in \II$ of total cost at most $\beta$ (i.e., $c(X)=\sum_{e\in X}c(e)\leq \beta$). The {\em profit} of a solution $X$ is $p(X)=\sum_{e\in X} p(e)$, and the objective is to find a solution of maximal profit.

BMI is a generalization of the classic {\em$0/1$-knapsack problem}, which is NP-hard
and therefore unlikely to admit an exact polynomial-time algorithm.
Thus, there is a long line of work on finding efficient approximations for
knapsack and its variants (for comprehensive surveys see, e.g.,~\cite{PT98,KPP04}).

Let $\OPT(I)$ be the value of an optimal solution for an instance $I$ of a maximization problem~$\Pi$. 
For $\alpha \in (0,1]$, we say that $\cA$ is an $\alpha$-approximation algorithm
 for $\Pi$ if, for any instance $I$ of $\Pi$, $\cA$ outputs a solution of value at least  $\alpha \OPT(I)$. A {\em polynomial-time approximation scheme} (PTAS)  for a maximization problem $\Pi$ is a family of algorithms $(\cA_{\eps})_{\eps>0}$ such that, for any $\eps>0$, $\cA_{\eps}$ is a polynomial-time $(1-\eps)$-approximation algorithm for $\Pi$; 
 $(\cA_{\eps})_{\eps>0}$ is an EPTAS if the running time of $\cA_{\eps}$ 
 is of the form $f\left(\frac{1}{\eps} \right) \cdot n^{O(1)}$, where $f$ is an arbitrary function, and $n$ is the bit-length encoding size of the input instance;
 $(\cA_{\eps})_{\eps>0}$ is an FPTAS if the running time of $\cA_{\eps}$ is of the form $\left(\frac{n}{\eps}\right)^{O(1)}$.

Polynomial-time approximation schemes allow us to obtain almost optimal solutions 
for NP-hard optimization problems via a speed-accuracy trade-off.
However, the strong dependency of run-times on the error parameter, $\eps >0$, often 
renders these schemes highly impractical. Thus, a natural goal is to seek the fastest
scheme for a given problem, assuming one exists. Having obtained a PTAS, the next step is
to consider possible improvements to an EPTAS, or even to an FPTAS. This is the focus of our work.

For the classic knapsack problem, 
a very efficient FPTAS exists since the 1970's. Lawler's scheme~\cite{La79},
based on ideas
from~\cite{IK75}, achieves a running time of $O(n \log 1/\eps + 1/\eps^4)$ for a $(1-\eps)$-approximation. In contrast, already the {\em two-dimensional knapsack} problem
has a PTAS but does not admit an EPTAS~\cite{KS10}. For the well known {\em multiple knapsack problem}, Chekuri and Khanna~\cite{CK05} derived the first PTAS, which was 
later improved by Jansen to an EPTAS~\cite{jansen2010parameterized, J12}.
The existence of an FPTAS is ruled out by a simple reduction from Partition~\cite{CK05}. More generally,
resolving the complexity status of NP-hard optimization problem with respect to approximation schemes has been  the focus of much research relating to resource allocation and scheduling (see, e.g.,~\cite{W99, DW17,Jansen_et_al:2019, JM19, JSW21} and the surveys ~\cite{SW01,ST18}).

For BMI, FPTASs are known for instances in which the matroid belongs to a restricted family.
One notable example is {\em multiple choice knapsack}~\cite{La79}, where the 
elements are partitioned into groups, and a set is independent if it contains at most one element from each group.
Another example is {\em  $1.5$-dimensional knapsack}~\cite{CKPP00}, in which 
a set is independent if it contains at most $k$ elements for some $ k \geq 1$ (i.e., a  uniform matroid constraint).

There are known results also for generalizations of BMI which involve multiple budget constraints or an additional matroid constraint. Specifically, 
Grandoni and Zenklusen developed in~\cite{GZ10} a  PTAS for  {\em multi-budgeted matroid independent set (MBMI)}, a generalization of BMI in which the costs are $d$-dimensional (for some constant $d\in \mathbb{N}$). The PTAS of \cite{GZ10} is based on integrality properties of a linear programming relaxation of the problem.  
As MBMI generalizes the two-dimensional knapsack problem, 
it does not admit an EPTAS unless $\textnormal{W}[1]=\textnormal{FPT}$  \cite{KS10}. 

The {\em budgeted matroid intersection} problem is a generalization of BMI in which the input includes membership oracles for independent sets of two different matroids, and the solution must be an independent set of both matroids. A PTAS for budgeted matroid intersection was developed in~\cite{BBGS11}. The algorithm of~\cite{BBGS11} uses a Lagrangian relaxation along
with some combinatorial properties of the problem to patch two solutions (i.e., a feasible solution with sub-optimal profit, and a non-feasible solution with high profit) into the final solution. The existence of an EPTAS (or an FPTAS) for budgeted matroid intersection is still open.

The  {\em multi-budgeted matroid intersection} problem is a generalization of both multi-budgeted matroid independent set  and budgeted matroid intersection, in which the cost function is $d$-dimensional, and the input contains two matroids. In \cite{CVZ11} the authors developed a PTAS for multi-budgeted matroid intersection, based on randomized rounding of a solution for a {\em linear programming (LP)} relaxation of the problem.

The budgeted matroid independent set  problem is also a special case of {\em multiple knapsack with a matroid}, a variant of BMI in which the input contains $m$ budgets 
$\beta_1, \ldots , \beta_m$.
A solution consists of $m$ sets $S_1,\ldots,S_m$, where the cost of the $j$th set is at most the $j$th budget ($c(S_j)\leq \beta_j$), and the union of the sets is an independent set of the matroid. A PTAS for the problem (based on randomized rounding) was given in~\cite{FKS21}. The existence of an FPTAS 
is ruled out, as {\em multiple knapsack} is a special case~\cite{CK05}.

To the best of our knowledge, BMI is studied here for the first time. 
A PTAS for the problem follows from known results for any of the above generalizations. 
In all cases, the running time of the 
scheme is dominated by an enumeration phase which guesses the most profitable elements in an optimal solution. 

Our main result is an EPTAS for BMI, thus substantially improving the running times of existing schemes for the problem.
Let $\ck =  (E,\cI,c,p, \beta)$ denote a BMI instance,
 $\OPT(\ck)$ the profit of an optimal solution for $\ck$, and $|\ck|$ the bit-length encoding of the instance $\ck$. 

\begin{theorem}
\label{thm:main}
Given an instance $\ck$ of BMI and $0<\eps<\frac{1}{2}$, there is an algorithm that outputs a solution of profit at least $(1 - \eps)\cdot  \OPT(\ck)$ in time $2^{ O\left(\eps^{-2}\cdot \log \frac{1}{\eps}\right)}  \cdot \textnormal{poly}(|\ck)$.
\end{theorem}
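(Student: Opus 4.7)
The plan is to follow the standard \emph{threshold and guess} strategy for EPTASs, combined with the representative set idea highlighted in the abstract. First, I would obtain a $(1+\eps)$-approximation of $\OPT(\ck)$ by trying $\textnormal{poly}(|\ck|, 1/\eps)$ candidate values (e.g., geometric powers of a simple lower bound). Fixing a guess $\Omega \approx \OPT(\ck)$ and a threshold $\tau = \eps \cdot \Omega$, I would classify an element $e\in E$ as \emph{heavy} if $p(e) > \tau$ and \emph{light} otherwise. Any feasible solution contains at most $1/\eps$ heavy elements. The PTAS inherited from the generalizations enumerates all $\binom{n}{1/\eps}$ such subsets, giving an $n^{O(1/\eps)}$ running time; the point of the EPTAS is to replace this with an enumeration over a set of size depending only on $1/\eps$.

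Accordingly, the core of the proof is the construction of a \emph{representative set} $R\subseteq E$ of heavy elements with $|R| = O(\eps^{-2}\log (1/\eps))$, such that some solution of value at least $(1-O(\eps))\OPT(\ck)$ has all its heavy elements in $R$. I would partition the heavy elements into $O(\eps^{-1}\log(1/\eps))$ \emph{profit classes}, where within each class profits agree up to a $(1+\eps)$ factor, and within each class extract $O(1/\eps)$ representatives via the greedy algorithm for minimum-cost matroid basis (possibly iterated a few times, or applied to a layered version of the matroid). The representative property is established by an exchange argument: given an optimal solution $S^*$, for every heavy $e \in S^*\setminus R$, the minimum-cost basis structure on the profit class containing $e$ lets me invoke matroid exchange to swap $e$ for some $e'\in R$ of the same class; independence and profit are preserved up to a $(1+\eps)$ factor, while the total cost increase over all $O(1/\eps)$ swaps is controlled because $R$ was chosen greedily for low cost.

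Given $R$, the algorithm enumerates all independent sets $H\subseteq R$ with $|H|\leq 1/\eps$, of which there are $\binom{|R|}{1/\eps} \leq 2^{O(\eps^{-2}\log(1/\eps))}$. For each candidate $H$, I would extend $H$ by light elements as follows: contract $H$ in the matroid $(E,\cI)$, restrict to light elements, and solve the natural LP relaxation that maximizes $\sum_e p(e)\,x_e$ over the matroid polytope of the contraction, subject to $\sum_e c(e)\,x_e \leq \beta - c(H)$ and $x\in[0,1]^E$. A basic optimal LP solution, lying on the intersection of the matroid polytope with a single knapsack constraint, has at most two fractional variables; rounding them down costs at most $2\tau = O(\eps)\cdot \OPT(\ck)$ in profit, because these variables correspond to light elements. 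Returning the best solution over all guesses of $\Omega$ and $H$ gives a $(1-O(\eps))$-approximation with the stated running time (after the usual rescaling of $\eps$).

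The main obstacle is the representative set construction and its exchange lemma. Matroid exchange easily provides \emph{some} element $e'$ that restores independence upon swapping, and the profit-class grouping controls the profit loss; the delicate point is that the swap must simultaneously respect the budget, summed over up to $1/\eps$ independent swaps. Getting this to work requires choosing $R$ so that each profit class contributes an $R$-part that is not only an independent set of bounded size but is also \emph{cost-minimal} in a precise sense suitable for repeated exchange. This is where matroid basis minimization, solvable by the standard greedy, enters: once $R$ is defined through greedy min-cost bases of the profit classes, aggregating the cost deviations over all classes yields the required $(1-\eps)$ bound. The remaining pieces—guessing $\Omega$, enumerating $H$, and rounding the LP—are then routine.
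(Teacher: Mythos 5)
Your high-level plan matches the paper: threshold the elements into heavy/light using a constant-factor approximation of $\OPT$, build a representative set $R$ of bounded size out of minimum-cost matroid bases over profit classes, enumerate the heavy part $F\subseteq R$ with $|F|\leq 1/\eps$, and extend with an LP over the matroid polytope, exploiting that a basic optimum has at most two fractional coordinates. The LP rounding loss of $O(\eps)\cdot\OPT(\ck)$ and the overall structure of the scheme are all as in the paper.

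However, there is a genuine gap in the representative set construction, and it is precisely the ``obstacle'' you identify but do not resolve. You propose $O(1/\eps)$ representatives per class and $|R| = O(\eps^{-2}\log(1/\eps))$ overall (which, incidentally, would give the better running time $2^{O(\eps^{-1}\log(1/\eps))}$, not the stated $2^{O(\eps^{-2}\log(1/\eps))}$, so the numbers already signal a mismatch). This is too small for the exchange argument to go through. The issue is not the budget, which does fall out of cost-minimality of the greedy bases; the issue is \emph{independence}. Swapping a heavy $e\in S^*\setminus R$ for some $e'\in R$ of the same profit class needs to keep the swapped set independent, but the set being swapped into is not just the $\leq 1/\eps$ heavy elements of $S^*$: it also contains the light part, which can have $\Theta(n)$ elements. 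Matroid exchange from a min-cost basis only gives a valid swap when the basis is ``saturated'' for the ambient matroid in which the swap takes place, and a basis truncated at $\eps^{-1}$ elements per class does not certify non-independence of $D+a$ in that ambient matroid (the failure may be attributable to the truncation constraint rather than to matroid independence).

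The paper's fix is a ``profit gap'' decomposition of $\OPT$ that your proposal is missing. Partition $\OPT$ by profit into levels $J_i = \{e \in \OPT : p(e)\in(\eps^i\OPT, \eps^{i-1}\OPT]\}$ for $i\in[\lceil\eps^{-1}\rceil]$, pick the index $i^*$ minimizing $p(J_{i^*})$ (so $p(J_{i^*}) \leq \eps\OPT$), and set $L = \bigcup_{k<i^*}J_k$ and $Q = \OPT\setminus(L\cup J_{i^*})$. Then $L \in \cI_{\leq q(\eps)}$ with $q(\eps) = \eps^{-\eps^{-1}}$, because every element of $L$ has profit above $\eps^{i^*-1}\OPT$. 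The replacement is performed only inside $L$ rather than inside all of $S^*$, so the exchange lemma can be invoked in the truncated matroid $(E,\cI_{\leq q(\eps)})$ --- and the per-class bases in $R$ must therefore be taken in $[(E,\cI)\cap C_r]_{\leq q(\eps)}$, not in a matroid truncated at $\eps^{-1}$. This is what forces $|R|$ up to roughly $\eps^{-3\eps^{-1}}$ and yields the $2^{O(\eps^{-2}\log(1/\eps))}$ bound after enumerating $\eps^{-1}$-subsets of $R$. The tail $Q$ is then re-attached via the ordinary exchange axiom; the profit gap of a factor $\eps$ between $L$ and $Q$ ensures that the at most $|L| \leq q(\eps)$ elements of $Q$ that must be dropped cost only $O(\eps)\OPT$ in profit, since $|L|\cdot \eps^{i^*}\OPT \leq \eps\cdot p(L)$. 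Without this decomposition, the direct swap-in-$S^*$ argument does not close, and the claimed bound on $|R|$ is too optimistic.
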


\noindent {\bf Main Technique.} Our algorithm builds upon a framework of Grandoni and Zenklusen~\cite{GZ10} for multi-budgeted matroid independent set. In~\cite{GZ10} the authors show that a basic solution for a linear programming relaxation of the problem has only a few non-integral entries. Thus, a solution for MBMI is constructed by solving the LP relaxation and adding all the elements with non-zero integral entries
to the MBMI solution. An exhaustive enumeration phase which guesses the $\Theta\left(\frac{1}{\eps}\right)$ most profitable elements in an optimal solution is used to mitigate the profit loss caused by discarding the (few) elements with non-integral entries in the solution for the LP.  The running time of the algorithm is dominated by the $|E|^{\Theta(\frac{1}{\eps})}$ operations required for exhaustive enumeration.  

The improved running time of our algorithm is obtained by reducing the 
 time complexity of the enumeration phase. Consider a BMI instance 
 $\ck =  (E,\cI,c,p, \beta)$.
 Given some $0<\eps<\frac{1}{2}$, we say that an element $e\in E$  is {\em profitable} if $p(e) > \eps \cdot \OPT(\ck)$.
 Our algorithm identifies a {\em representative set} of elements
$R\subseteq E$ satisfying $|R|\leq f\left( \frac{1}{\eps}\right)$,
for a computable function $f$. 
Furthermore, we show that the BMI instance has a solution $S$  where all elements in $S\setminus R$ are non-profitable and $p(S) \geq (1-O(\eps)) \cdot \OPT(\ck)$. Thus, we can use enumeration to guess $S\cap R$ and then extend the solution using an LP relaxation similar to~\cite{GZ10}. 
Crucial to our construction of $S$ is the notion of {\em profit gap}, used for identifying elements that may be added to $S$ by solving the LP (see Section~\ref{sec:alg}).
Since all the elements in $S\setminus R$ are non-profitable, the profit loss 
caused by the few non-integral entries is negligible. Moreover, since $|R|\leq f\left( \frac{1}{\eps}\right)$, guessing $S\cap R$ can be done in $2^{f\left(\frac{1}{\eps}\right)}$ steps (in fact, we obtain a slightly better running time), eliminating the dependency of
enumeration on the input size.  The  representative set is identified via matroid basis minimization,
which can be solved by a simple greedy algorithm (see Section~\ref{sec:alg}). 

\noindent {\bf Organization.}
In Section~\ref{sec:preliminaries} we give some definitions and notation. 
Section~\ref{sec:alg} presents our approximation scheme, $\textsf{EPTAS}$, and its analysis.
In Section~\ref{sec:lemMainProof} we give a proof of correctness for algorithm $\textsf{FindRep}$
that is used as a subroutine by the scheme. We conclude in Section~\ref{sec:discussion}
with a summary and open problems.

			\section{Preliminaries}
\label{sec:preliminaries}

For simplicity of the notation, for any set $A$ and an element $e$,  we use $A+e$ and  $A-e$ to denote $A \cup \{e\}$ and $A \setminus \{e\}$, respectively. Also, for any $k \in \mathbb{R}$ let $[k] = \{1,2,\ldots,\floor{k}\}$. Finally, for a function $f:A \rightarrow B$ and a subset of elements $C \subseteq A$, we define $f(C) = \sum_{e \in C} f(e)$. 

\subsection{Matroids}

Let $E$ be a finite ground set and $\cI \subseteq 2^E$ a non-empty set containing subsets of $E$ called the {\em independent sets} of $E$. Then, $\cm = (E, \cI)$ is a {\em matroid} if the following hold: 

\begin{enumerate}
\item (Hereditary Property) For all $A \in \cI$ and $B \subseteq A$, it holds that $B \in \cI$.
	
\item(Exchange Property) For any $A,B \in \cI$ where $|A| > |B|$, there is $e \in A \setminus B$ such that $B +e \in \cI$. 
\end{enumerate} 
The next observation follows 
by repeatedly applying the exchange property.
\begin{observation}
\label{ob:matroid}
Given a matroid $(E,\cI)$ and $A,B \in \cI$, there is $D \subseteq A \setminus B$, $|D| = \max\{|A|-|B|,0\}$ such that $B \cup D \in \cI$. 
\end{observation}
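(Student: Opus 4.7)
The plan is to prove the observation by induction on the quantity $k := \max\{|A|-|B|,0\}$, using the exchange property repeatedly, exactly as the statement's preamble suggests.

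\textbf{Base case.} When $k = 0$, i.e., $|A| \leq |B|$, I simply take $D = \emptyset$. Then $|D| = 0 = k$, $D \subseteq A \setminus B$ trivially, and $B \cup D = B \in \cI$ by assumption.

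\textbf{Inductive step.} When $k \geq 1$, so $|A| > |B|$, the exchange property guarantees some $e \in A \setminus B$ with $B' := B + e \in \cI$. Now $|A| - |B'| = |A| - |B| - 1 = k - 1 \geq 0$, so applying the induction hypothesis to the pair $(A, B')$ yields a set $D' \subseteq A \setminus B'$ with $|D'| = k - 1$ and $B' \cup D' \in \cI$. I set $D := D' + e$; since $e \notin B'$ but $e \in B'$ is false---rather, $e \in B'$, so $e \notin A \setminus B'$, hence $e \notin D'$---the union is disjoint and $|D| = k$. Moreover $D \subseteq A \setminus B$ because $e \in A \setminus B$ and $A \setminus B' \subseteq A \setminus B$, and $B \cup D = B' \cup D' \in \cI$. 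This closes the induction.

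\textbf{Main obstacle.} There is no real obstacle: the only care required is to verify, in the inductive step, that the element $e$ produced by the exchange property is not already counted in $D'$ (so that $|D| = |D'| + 1$) and that the resulting $D$ lies in $A \setminus B$ rather than merely in $A \setminus B'$. Both follow immediately from $e \in A \setminus B$ and $D' \subseteq A \setminus B' \subseteq A \setminus B$, so the argument is essentially a direct unrolling of the exchange property.
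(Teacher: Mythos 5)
Your proof is correct and is precisely the formalization of the paper's one-line remark that the observation ``follows by repeatedly applying the exchange property'': you make the repetition rigorous by inducting on $k=\max\{|A|-|B|,0\}$ and adding one exchanged element per step. The only blemish is the garbled clause ``since $e\notin B'$ but $e\in B'$ is false---rather, $e\in B'$,\ldots''; the intended (and correct) reasoning is simply that $e\in B'$ forces $e\notin A\setminus B'\supseteq D'$, so the union $D=D'+e$ is disjoint.
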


A {\em basis} of a matroid $\cm = (E, \cI)$ is an independent set $B \in \cI$ such that for all $e \in E \setminus B$ it holds that $B+e \notin \cI$. Given a cost function $c:E \rightarrow \mathbb{R}^{+}$, we say that a basis $B$ of $\cm$ is a {\em minimum} basis of $\cm$ w.r.t. $c$ if, for any basis $A$ of $\cm$ it holds that $c(B) \leq c(A)$. A minimum basis of $\cm$ w.r.t. $c$ can be easily constructed in polynomial-time using the greedy approach (see, e.g., ~\cite{cormen2022introduction}). 
In the following we define several operations on matroids that will be useful for deriving our results.

\begin{definition}
\label{def:matroids}
Let $\cm = (E, \cI)$ be a matroid.  
\begin{enumerate}
		
		\item (restriction) For every $F \subseteq E$ define $\cI_{\cap F} = \{A \in \cI~|~ A \subseteq F\}$ and $\cm \cap F = (F, \cI_{\cap F})$.\label{prop1:restriction}
		
		\item (contraction) For every $F \in \cI$ define $\cI_{/ F} = \{A \subseteq E \setminus F~|~ A \cup F \in \cI\}$  and $\cm / F = (E \setminus F, \cI_{/ F})$.\label{prop1:contraction}
		
		\item (truncation) For every $q \in \mathbb{N}$ define $\cI_{\leq q} = \{A \in \cI~|~ |A| \leq q\}$ and $[\cm]_{\leq q} = (E, \cI_{\leq q})$.\label{prop1:trunc}
		
		\item (union) Let $M_1 = (E_1, \cI_1), \ldots, (E_k, \cI_k)$ be matroids; define $\bigvee_{i \in [k]} M_i = (E_{\vee, [k]}, \cI_{\vee, [k]})$, where \newline $E_{\vee, [k]} =  \bigcup_{i \in [k]} E_i$  and $ \cI_{\vee, [k]} =  \left\{\bigcup_{i \in [k]} F_i~\big|~\forall i \in [k]: F_i \in \cI_i\right\}$.
		\label{prop1:union} 	
\end{enumerate}
\end{definition}

The next lemma gathers known results which follow directly
from the definition of a matroid (see, e.g., \cite{schrijver2003combinatorial}). 

\begin{lemma}
\label{lem:prel}
	
Let $\cm = (E, \cI)$ be a matroid. 
\begin{enumerate}
\item 
For any $F \subseteq E$, the restriction of $\cm$ to $F$ (i.e., $\cm \cap F$)  is a matroid.\label{prop:restriction}
		
\item For any $F \in \cI$, the contraction of $\cm$ by $F$ (i.e., $\cm / F$) is a matroid.\label{prop:contraction}
		
\item For any $q \in \mathbb{N}$, the truncation of $\cm$ (i.e., $[\cm]_{\leq q}$) is a matroid.\label{prop:truncation}

\item Given matroids $M_1 = (E_1, \cI_1), \ldots, M_k = (E_k, \cI_k)$, the union of 
$M_1, \ldots, M_k$ (i.e., $\bigvee_{i \in [k]} M_i$) is a matroid.\label{prop:union}
\end{enumerate}
\end{lemma}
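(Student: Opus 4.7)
The plan is to verify the two matroid axioms (hereditary and exchange) separately for each of the four constructions. Parts \ref{prop:restriction}, \ref{prop:contraction}, and \ref{prop:truncation} are direct consequences of the axioms of $\cm$, while part \ref{prop:union} is the classical matroid union theorem of Nash-Williams and Edmonds, which is not straightforward to derive from the axioms alone and is the main obstacle; I would handle it by citing \cite{schrijver2003combinatorial}.

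For the restriction $\cm \cap F$, hereditary is immediate since $B \subseteq A \in \cI_{\cap F}$ gives $B \subseteq F$ and $B \in \cI$. For exchange, given $A,B \in \cI_{\cap F}$ with $|A|>|B|$, the exchange property in $\cm$ yields $e \in A \setminus B$ with $B+e \in \cI$; since $e \in A \subseteq F$, we also have $B+e \subseteq F$, so $B+e \in \cI_{\cap F}$. The truncation $[\cm]_{\leq q}$ is handled the same way: hereditary is trivial because subsets can only shrink in cardinality, and for exchange the element $e \in A\setminus B$ provided by $\cm$ satisfies $|B+e|=|B|+1 \leq |A| \leq q$, so $B+e \in \cI_{\leq q}$.

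The contraction $\cm/F$ (where $F \in \cI$) requires only slightly more care. Hereditary: if $A \in \cI_{/F}$ and $B \subseteq A$, then $B \cup F \subseteq A \cup F \in \cI$, so the hereditary property of $\cm$ gives $B \cup F \in \cI$, i.e., $B \in \cI_{/F}$. For exchange, take $A,B \in \cI_{/F}$ with $|A|>|B|$. Since $A,B \subseteq E\setminus F$, the unions $A\cup F$ and $B\cup F$ are disjoint unions, so $|A\cup F| = |A|+|F| > |B|+|F| = |B\cup F|$, and both lie in $\cI$. Applying exchange in $\cm$ produces $e \in (A\cup F)\setminus (B\cup F) = A\setminus B$ with $(B\cup F)+e \in \cI$, which means $B+e \in \cI_{/F}$ as required.

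For the union $\bigvee_{i \in [k]} M_i$, a direct verification is far less routine: hereditary is easy (a subset of a partitioned independent set can be partitioned along the same decomposition), but the exchange property requires an argument that is essentially equivalent to the matroid union theorem. Rather than reproducing that proof, I would appeal to the standard reference \cite{schrijver2003combinatorial}, which establishes that the union of finitely many matroids on (possibly overlapping) ground sets is again a matroid, with its rank function given by the min-max formula of Edmonds. This completes all four parts.
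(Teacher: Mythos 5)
Your proof is correct, and it is consistent with the paper's own treatment, which gives no proof at all and simply cites~\cite{schrijver2003combinatorial} for all four facts. Your explicit verification of the hereditary and exchange axioms for restriction, contraction, and truncation is sound (the cardinality bookkeeping in the contraction case, using $A,B \subseteq E\setminus F$ to get a disjoint union with $F$, is exactly the point that needs care), and you correctly isolate part~\ref{prop:union} as the only item that genuinely requires an external theorem. One observation that would let you dispense with the citation entirely in the context of this paper: the only place part~\ref{prop:union} is invoked (Step~\ref{step:Br} of Algorithm~\ref{alg:representative}, via Lemma~\ref{lem:minBasis}) takes the union of matroids on \emph{pairwise disjoint} ground sets, namely the profit classes $C_r(\alpha)$. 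In that special case $\bigvee_i M_i$ is just the direct sum, a set is independent iff its trace on each $E_i$ is independent in $M_i$, and the exchange property follows by a pigeonhole argument of the same flavor as your parts 1--3: if $|A|>|B|$ then $|A\cap E_i|>|B\cap E_i|$ for some $i$, and exchanging within $M_i$ suffices. The full Nash-Williams/Edmonds matroid union theorem is needed only when the ground sets overlap, which the lemma as stated permits but the paper never uses.
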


\subsection{Matroid polytope}

Let $\cm = (E, \cI)$ be a matroid. Given $B \in \cI$, the {\em indicator vector} of $B$ is the vector $\mathbbm{1}^B \in \{0,1\}^E$, where for all $a \in B$ and $b \in E \setminus B$ we have $\mathbbm{1}^B_a = 1$ and  $\mathbbm{1}^B_b = 0$, respectively. Then the {\em matroid polytope} of $\cm$ is the convex hull of the set of indicator vectors of all independent sets of $\cm$: $P_{\cm} = \textsf{conv} \{\mathbbm{1}^B~|~B \in \cI\}$. 
The next observation will be used in the analysis of our scheme. 
\begin{observation}
	\label{ob:convexHull}
	Let $\cm = (E,\cI)$ be a matroid, and $\bar{x} \in P_{\cm}$. Then $\{e \in E~|~\bar{x}_e = 1\} \in \cI$. 
\end{observation}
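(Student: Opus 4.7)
The plan is to exploit the fact that $P_{\cm}$ is defined as the convex hull of indicator vectors of independent sets, so any $\bar{x} \in P_{\cm}$ can be written as a finite convex combination $\bar{x} = \sum_{i} \lambda_i \mathbbm{1}^{B_i}$ with $\lambda_i > 0$, $\sum_i \lambda_i = 1$, and each $B_i \in \cI$. The goal is then to show that $S := \{e \in E \mid \bar{x}_e = 1\}$ lies inside every $B_i$ appearing in this combination; the hereditary property of the matroid will finish the argument.

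First I would fix such a convex decomposition of $\bar{x}$ and take any $e \in S$. Since $\bar{x}_e = 1$, I would expand
\[
1 = \bar{x}_e = \sum_i \lambda_i \mathbbm{1}^{B_i}_e.
\]
Each coordinate $\mathbbm{1}^{B_i}_e$ lies in $\{0,1\}$, the weights $\lambda_i$ are nonnegative and sum to $1$, so the only way this weighted average can equal $1$ is if $\mathbbm{1}^{B_i}_e = 1$ for every $i$ with $\lambda_i > 0$. Hence $e \in B_i$ for all such $i$.

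Since $e \in S$ was arbitrary, this gives $S \subseteq B_i$ for every $i$ in the support of the combination. Picking any one such $B_i \in \cI$ and applying the hereditary property from the matroid definition (subsets of independent sets are independent) immediately yields $S \in \cI$, which is exactly the claim.

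I do not expect a genuine obstacle here; the only subtlety is being careful to use the exact equality $\bar{x}_e = 1$ (rather than $\bar{x}_e > 0$, which would be false in general) together with $\mathbbm{1}^{B_i}_e \in \{0,1\}$ to force $e \in B_i$ for all $i$ in the support. Everything else is a direct appeal to Definition/Lemma \ref{lem:prel} and the definition of $P_{\cm}$.
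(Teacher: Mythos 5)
Your proof is correct. The paper actually states this as an unproved observation (no argument is supplied), so there is nothing to compare against; your argument is the natural one. Writing $\bar{x}$ as a convex combination of indicator vectors of independent sets, noting that a weighted average of $\{0,1\}$-values with positive weights summing to $1$ can equal $1$ only if every term in the support equals $1$, concluding $S \subseteq B_i$ for every $B_i$ in the support, and then invoking the hereditary property is exactly the expected reasoning, and every step is airtight.
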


			\section{The Algorithm}
\label{sec:alg}

In this section we present an EPTAS for BMI. Our scheme initially handles elements of high profits. This is done by finding a small {\em representative set} out of which the 
scheme selects the most profitable elements in the solution.
 More elements, of lower profits, are then added to the solution using a linear program. For the remainder of this section, fix a BMI instance $\ck =  (E,\cI,c,p,\beta)$ and a parameter $0<\eps <\frac{1}{2}$. W.l.o.g., for all $e \in E$ we assume that 
 $\{e\} \in \cI$ (otherwise $e$ cannot belong to any solution for $\ck$).

Let $H(\ck,\eps) = \{\ell \in E~|~ p(\ell) > \eps \cdot \OPT(\ck)\}$ be the set of {\em profitable} elements in $\ck$, and $E \setminus H(\ck,\eps)$ the set of {\em non-profitable} elements; when understood from the context, we simply use $H = H(\ck,\eps)$. 
We can easily obtain a PTAS by enumerating
over all subsets of at most $\eps^{-1}$ profitable elements to find the profitable elements in the solution. However, such exhaustive search is done in time 
$\Omega\left(|\ck|^{\eps^{-1}}\right)$, which cannot lead to an EPTAS.
Thus, we take a different approach.

A key observation is that efficient solution can be obtained without enumerating over {\em all} subsets of profitable elements. Instead, we limit our scheme to a smaller search space using the notions of {\em replacements} and {\em representative sets}. 
Let $S$ be an independent set having a bounded number of elements. A  {\em replacement} of $S$ 
is a subset of elements which can replace the profitable elements in~$S$, resulting with an independent set of lower cost and almost the same profit.  A representative set $R$ is a subset of elements which  
contains a replacement within $R$ for any independent set with bounded number of elements.   Definitions~\ref{def:Replacement} and~\ref{def:Representatives} give the formal properties of replacements and representative sets, respectively. Let $q(\eps) = \eps^{-\eps^{-1}}$, and recall that $\cI_{\leq q(\eps)} = \{A \in \cI~|~|A| \leq q(\eps)\}$ (the selection of value for $q(\eps)$ 
becomes clear in the proof of Lemma~\ref{lem:Solution}).

 \begin{definition}
	\label{def:Replacement}
	Given a \textnormal{BMI} instance $\ck = (E,\cI,c,p,\beta), 0<\eps<\frac{1}{2}$, $G \in \cI_{\leq q(\eps)}$, and $Z_G \subseteq E$, we say that $Z_G$ is a {\em replacement} of $G$ for $\ck$ and $\eps$ if the following holds: 
	\begin{enumerate}
		
		\item $(G \setminus H) \cup Z_G \in \cI_{\leq q(\eps)}$.\label{p:I}
		\item $c(Z_G) \leq c(G \cap H)$.\label{p:s}
		
		\item $p\left((G \setminus H) \cup Z_G\right) \geq (1-\eps) \cdot p(G)$.\label{p:p}
		
		\item $|Z_G| \leq |G \cap H|$.\label{p:car}
	\end{enumerate}
\end{definition}

 \begin{definition}
\label{def:Representatives}
	Given a \textnormal{BMI} instance $\ck = (E,\cI,c,p,\beta), 0<\eps<\frac{1}{2}$ and $R \subseteq E$, we say that $R$ is a {\em representative set} of $\ck$ and $\eps$ if, for any $G \in \cI_{\leq q(\eps)}$, there is a replacement $Z_G \subseteq R$ of $G$ for $\ck$ and $\eps$. 
\end{definition}

In particular, observe that for any solution $S$ of $\ck$ we have that $S \cap H$ is a replacement of $S$; also, $E$ is a representative set. In the next lemma we show that there exists an almost optimal solution in which all profitable elements belong to a given representative set. Hence, guessing the profitable elements only requires enumerating over subsets of a representative set.
 
\begin{lemma}
	\label{lem:Solution}
	Let $\ck = (E,\cI,c,p,\beta)$ be a \textnormal{BMI} instance and $0<\eps<\frac{1}{2}$. Also, let $R$ be a representatives set of $\ck$ and $\eps$. Then, there is a solution $S$ of $\ck$ such that the following holds.
	
	\begin{enumerate}
		\item $S \cap H \subseteq R$. 
		\item $p\left(S\right) \geq (1-3\eps) \OPT(K)$.
	\end{enumerate} 
\end{lemma}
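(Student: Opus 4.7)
The plan is to start with an optimal solution $S^*$ of $\ck$, with $p(S^*) = \OPT(\ck)$ and $c(S^*) \leq \beta$, split it as $S^* = P \cup N$ where $P := S^* \cap H$ and $N := S^* \setminus H$, and replace the profitable part $P$ via the representative property after augmenting it with a well-chosen portion of $N$. Since every element of $H$ has profit exceeding $\eps \OPT(\ck)$, we immediately have $|P| < 1/\eps$. Let $k := \min\{|N|, q(\eps) - |P|\}$ and let $N_0 \subseteq N$ consist of the $k$ elements of $N$ with the largest profits. Setting $G := P \cup N_0 \subseteq S^*$ gives $|G| \leq q(\eps)$, so $G \in \cI_{\leq q(\eps)}$, and the representative property provides $Z_G \subseteq R$ satisfying all four conditions of Definition~\ref{def:Replacement} for this $G$.

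Next, I will extend $T := N_0 \cup Z_G \in \cI$ by the matroid greedy algorithm inside the submatroid $\cm \cap (Z_G \cup N)$: process $(N \setminus N_0) \setminus Z_G$ in decreasing order of profit and add each element that keeps the current set independent, yielding some $N_1 \subseteq (N \setminus N_0) \setminus Z_G$. Set $S := T \cup N_1 = Z_G \cup N_0 \cup N_1 \in \cI$. Feasibility follows from $c(S) \leq c(Z_G) + c(N_0) + c(N_1) \leq c(P) + c(N_0) + c(N \setminus N_0) = c(S^*) \leq \beta$ via replacement condition~2, and condition~(1) of the lemma follows from $S \cap H \subseteq Z_G \cap H \subseteq R$ since $N_0 \cup N_1 \subseteq N$ avoids $H$.

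The core technical step is the profit estimate. Using the disjointness of $Z_G$, $N_0 \setminus Z_G$, and $N_1$, I have $p(S) = p(T) + p(N_1)$, and replacement condition~3 gives $p(T) \geq (1-\eps)(p(P) + p(N_0))$. Let $D \subseteq (N \setminus N_0) \setminus Z_G$ denote the elements skipped by greedy. I would establish the counting bound
\[
|D| + |(N \setminus N_0) \cap Z_G| \leq |Z_G \setminus N_0| \leq |Z_G| \leq |P|
\]
by applying Observation~\ref{ob:matroid} to $A = N$ and $B = T$ inside $\cm \cap (Z_G \cup N)$, whose rank is at least $|N|$ since $N \in \cI$; this forces $|N_1| \geq |N| - |T|$, after which the rest is elementary set arithmetic. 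Because $N_0$ consists of the highest-profit elements of $N$, each element counted on the left has profit at most $\min_{e \in N_0} p(e) \leq p(N_0)/|N_0|$, so the combined profit loss coming from $D$ and from the overlap $(N \setminus N_0) \cap Z_G$ is at most $|P|\cdot \OPT(\ck)/(q(\eps) - |P|)$.

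The final ingredient is the numeric bound: using $|P| < 1/\eps$ together with $q(\eps) = \eps^{-1/\eps} \geq 1/\eps^2$ (which holds precisely because $\eps < 1/2$), I get $|P|/(q(\eps) - |P|) \leq \eps/(1-\eps) \leq 2\eps$, so the combined loss is at most $2\eps\OPT(\ck)$. This is exactly where the chosen value of $q(\eps)$ is used. Assembling the pieces yields
\begin{align*}
p(S) &\geq (1-\eps)\bigl(p(P) + p(N_0)\bigr) + p(N \setminus N_0) - 2\eps\OPT(\ck) \\
&\geq \OPT(\ck) - \eps\bigl(p(P) + p(N_0)\bigr) - 2\eps\OPT(\ck) \geq (1-3\eps)\OPT(\ck),
\end{align*}
using $p(P) + p(N_0) \leq p(S^*) = \OPT(\ck)$. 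The main obstacle I anticipate is the joint counting bound: charging the greedy-dropped set $D$ and the overlap $(N \setminus N_0) \cap Z_G$ separately gives a loss of about $4\eps\OPT(\ck)$ from the extension step alone, spoiling the $3\eps$ target, so the key trick is to bound them together through $|Z_G \setminus N_0|$ using the rank inequality above.
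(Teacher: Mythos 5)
Your proof is correct, and it reaches the $(1-3\eps)$ bound by a genuinely different decomposition of the optimal solution than the paper's. The paper partitions an optimal solution $\OPT$ into geometric ``profit buckets'' $J_1,\dots,J_{\lceil\eps^{-1}\rceil}$, uses a pigeonhole argument to find a low-profit bucket $J_{i^*}$ with $p(J_{i^*})\leq\eps\OPT(\ck)$ to discard, and then splits the rest into a high part $L=\bigcup_{k<i^*}J_k$ and a low part $Q$; the multiplicative profit gap between $L$ and $Q$ (a factor $\eps$) controls the loss from the at-most-$|L|$ elements of $Q$ that get dropped in the exchange step. You instead split $\OPT=P\cup N$ at the threshold $\eps\OPT$ directly (so $P$ is exactly the profitable part), pad $P$ with the $k\approx q(\eps)-|P|$ highest-profit elements $N_0$ of $N$ to form $G$, and bound the loss from the greedy extension by an \emph{averaging} argument: every dropped or overlapping element lies in $N\setminus N_0$ and thus has profit at most $p(N_0)/|N_0|\leq\OPT/k$, and there are at most $|Z_G|\leq|P|<1/\eps$ of them, giving a loss $\leq|P|\cdot\OPT/k\leq 2\eps\OPT$. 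So where the paper uses bucketing and pigeonhole to manufacture a multiplicative gap, you use the choice $q(\eps)=\eps^{-\eps^{-1}}\geq\eps^{-2}$ to make $N_0$ large enough that the average non-profitable profit is small; both arguments exploit $q(\eps)$'s super-constant growth, just in different ways. One small point worth making explicit if you write this up: the joint counting bound $|D|+|(N\setminus N_0)\cap Z_G|\leq|Z_G\setminus N_0|$, which you correctly identify as the crux, follows most cleanly by noting $|S|=\mathrm{rank}\bigl(\cm\cap(Z_G\cup N)\bigr)\geq|N|$ (since $S$ is a maximal independent subset of $Z_G\cup N$ and $N\in\cI$), hence $|N_1|=|S|-|T|\geq|N|-|T|$, and then $|D|+|(N\setminus N_0)\cap Z_G|=(|N|-|N_0|)-|N_1|\leq|T|-|N_0|=|Z_G\setminus N_0|$; you should also dispose of the degenerate cases $k=|N|$ (no loss from the extension at all) and $N_0=\emptyset$ (forces $N=\emptyset$). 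With those details spelled out, the argument is sound.
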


We give a brief outline of the proof of Lemma~\ref{lem:Solution}. Informally, we consider the elements in some optimal solution, $\OPT$, in non-increasing order by profit. We then partition these elements into three sets: $L,J_{i^*}$ and $Q$, such that the maximal profit in $Q$ is at most $\eps$ times the minimum profit in $L$.
This is the {\em profit gap} of $L$ and $Q$. In addition, $L \in \cI_{\leq q(\eps)}$, and $p(J_{i^*}) \leq \eps \cdot \OPT(\ck)$. Thus, we can use $Z_L \subseteq R$ as a replacement of $L$, i.e., $L \cap H$ will be replaced by elements in $Z_L$ (note that $Z_L$ is not necessarily a subset of the profitable elements).
We now discard $J_{i^*}$, and define $\Delta_L = (L \setminus H) \cup Z_L$. As 
$\Delta_L \cup Q$ may not be an independent set, we use Observation~\ref{ob:matroid}
to construct $T \subseteq Q, |T| \geq |Q|-|L|$  such that $S = \Delta_L \cup T \in \cI$.
An illustration of the proof is given in Figure~\ref{fig:proof}. 

	\begin{figure}[htbp]
	\hspace*{-0.25cm}                                                           
	\includegraphics[scale=0.33]{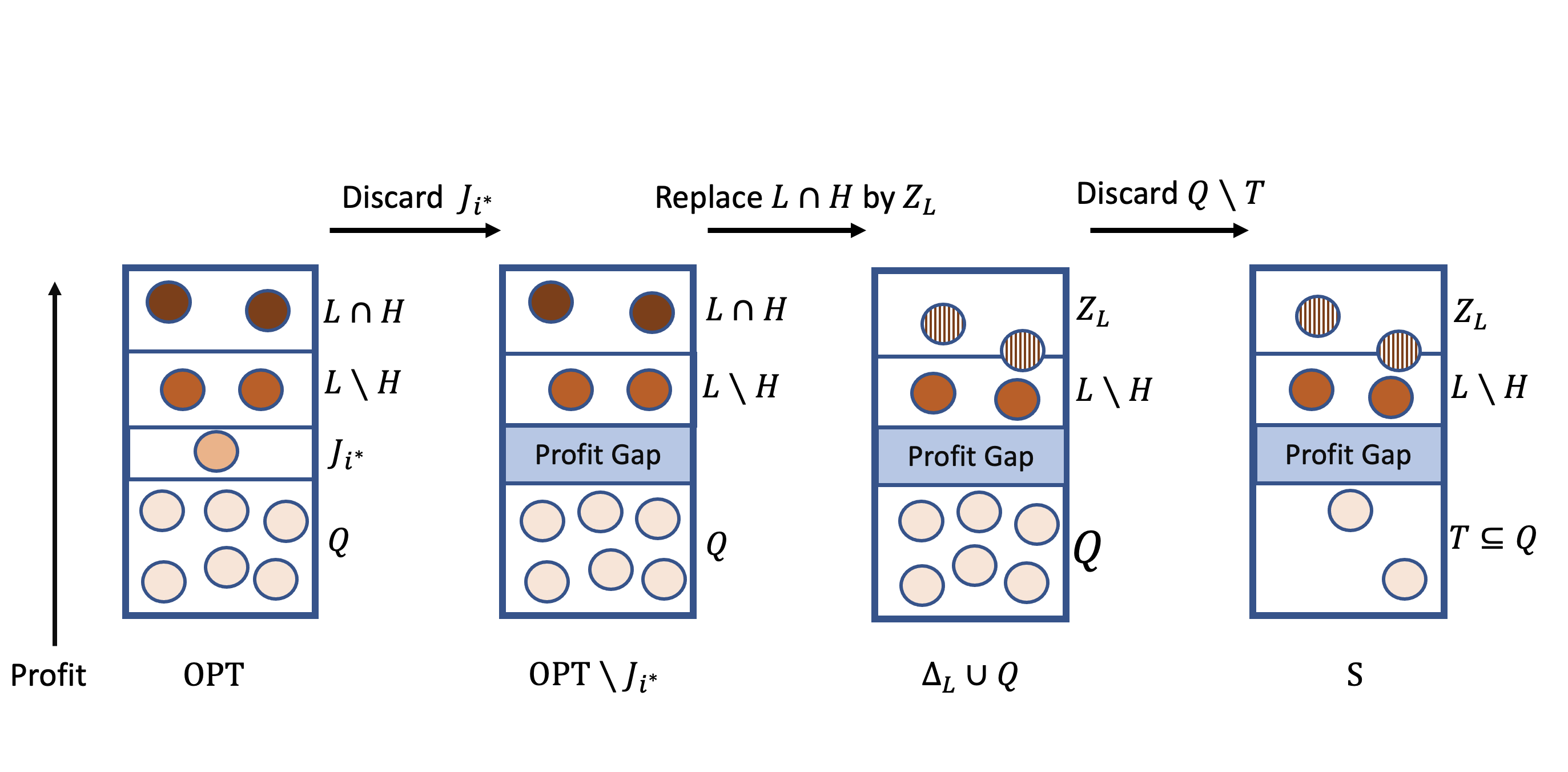}%
	\hspace{1mm}%
	\caption{The construction of the solution $S$ in the proof of Lemma~\ref{lem:Solution}.}
	\label{fig:proof}
\end{figure}

\noindent{\em Proof of Lemma~\ref{lem:Solution}}: 
	With a slight abuse of notation, 
	we use $\OPT$ also to denote the profit of an optimal solution for $\ck$.
	Given an optimal solution, we partition a subset of the elements in the solution 
	  into $\eps^{-1}$ disjoint sets (some sets may be empty). Specifically,
	   let $N = \ceil{\eps^{-1}}$; for all $i \in [N]$ define 
	\begin{equation}
		\label{J_i}
		J_i = \big\{e \in \OPT~\big|~ p(e) \in \big(\eps^{i} \cdot \OPT(\ck),  \eps^{i-1} \cdot \OPT(\ck) \big] \big\}.		
	\end{equation} 
	Let $i^* = \argmin_{i\in [N]} p(J_i)$. By \eqref{J_i} we have at least 
	$\eps^{-1}$ disjoint sets; thus, $p(J_{i^*}) \leq \eps \cdot \OPT(\ck)$. 
Now, let $L~= ~\bigcup_{k \in  [i^*-1]} J_k$ be the subset of all elements in $\OPT$ 
of profits greater than $\eps^{i^*-1} \cdot \OPT(\ck)$, and $Q = \OPT \setminus (L \cup J_{i^*})$. 
To complete the proof of the lemma, we need several claims.
	\begin{claim}
	\label{clam:Iq}
$L \in \cI_{\leq q(\eps)}$. 
\end{claim}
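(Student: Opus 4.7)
The plan is to establish the two requirements for membership in $\cI_{\leq q(\eps)}$ separately, namely that $L$ is independent and that $|L| \leq q(\eps)$. Independence is essentially free: by construction $L \subseteq \OPT$, and $\OPT \in \cI$ since it is a feasible solution of $\ck$, so the hereditary property in item~1 of the matroid definition (applied via Lemma~\ref{lem:prel}) immediately gives $L \in \cI$.

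The substantive part is the cardinality bound $|L| \leq q(\eps)$, which I would prove by a profit-counting argument. By~\eqref{J_i}, every element of $J_k$ has profit strictly greater than $\eps^{k}\cdot\OPT(\ck)$; since $L = \bigcup_{k \in [i^*-1]} J_k$, every $e \in L$ therefore satisfies $p(e) > \eps^{i^*-1}\cdot\OPT(\ck)$. Combining this with the trivial upper bound $p(L) \leq p(\OPT) = \OPT(\ck)$ yields
$$|L| \;<\; \frac{\OPT(\ck)}{\eps^{i^*-1}\cdot\OPT(\ck)} \;=\; \eps^{-(i^*-1)}.$$

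To translate this into the desired bound I would use that $i^* \in [N]$ with $N = \ceil{\eps^{-1}}$, so $i^* - 1 \leq \ceil{\eps^{-1}} - 1 \leq \eps^{-1}$. Since $0 < \eps < \tfrac{1}{2}$, the map $a \mapsto \eps^{-a}$ is strictly increasing, hence $\eps^{-(i^*-1)} \leq \eps^{-\eps^{-1}} = q(\eps)$. As $|L|$ is an integer strictly below $\eps^{-(i^*-1)}$, we conclude $|L| \leq q(\eps)$, and together with independence this gives $L \in \cI_{\leq q(\eps)}$.

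I do not anticipate a real obstacle here: the value $q(\eps) = \eps^{-\eps^{-1}}$ is tuned precisely so that the counting argument goes through for every admissible $i^*$, and both ingredients (hereditary property, and the geometric profit thresholds in~\eqref{J_i}) are already in place. The only spot requiring minor care is the inequality $\ceil{\eps^{-1}} - 1 \leq \eps^{-1}$, which follows from the elementary fact that $\ceil{x} \leq x+1$ for every $x > 0$.
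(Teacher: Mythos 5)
Your proof is correct and takes essentially the same route as the paper: independence of $L$ via the hereditary property applied to $L\subseteq\OPT$, and the cardinality bound via the profit-counting argument $|L|\cdot\eps^{i^*-1}\cdot\OPT(\ck)\leq p(L)\leq\OPT(\ck)$ followed by $i^*-1\leq N-1\leq\eps^{-1}$. The only cosmetic difference is that you carry a strict inequality and then invoke integrality of $|L|$ (which is unnecessary, since $|L|<\eps^{-(i^*-1)}\leq q(\eps)$ already gives $|L|\leq q(\eps)$), whereas the paper phrases the same chain with non-strict inequalities.
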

\begin{claimproof}
Since $L \subseteq \OPT$, by the hereditary property of $(E,\cI)$ we have that $L \in \cI$. Also, 
	\begin{equation*}
	\label{contradiction1}
	 |L| \leq \sum_{e \in L} \frac{p(e)}{\eps^{i^*-1} \cdot \OPT(\ck)} = \frac{p(L)}{\eps^{i^*-1} \cdot \OPT(\ck)} \leq  \eps^{- (i^{*}-1)}  \leq   \eps^{-N+1} \leq \eps^{-\eps^{-1}} =  q(\eps)
\end{equation*} The first inequality holds since $p(e) \geq  \eps^{i^*-1} \cdot \OPT$
for all $e \in L$. For the second inequality, we note that $L$ is a solution for $\ck$. By the above and Definition~\ref{def:matroids}, it follows that $L \in \cI_{\leq q(\eps)}$. 
\end{claimproof} 
By Claim~\ref{clam:Iq} and as $R$ is a representative set,
it follows that $L$ has a replacement $Z_L \subseteq R$. 
Let $\Delta_L = (L \setminus H) \cup Z_L$. By Property~\ref{p:I} of Definition~\ref{def:Replacement}, we have that $\Delta_L \in \cI_{\leq q(\eps)}$, and
 by Definition~\ref{def:matroids} it holds that $\cI_{\leq q(\eps)} \subseteq \cI$. 
Hence, $\Delta_L \in \cI$. Furthermore, as $Q \subseteq \OPT \in \cI$, by the 
hereditary property for $(E,\cI)$ we have that $Q \in \cI$. Therefore, by Observation~\ref{ob:matroid}, there is a subset $T \subseteq Q \setminus\Delta_L$, where $|T|  = \max\{|Q| - |\Delta_L|,0\}$, such that $\Delta_L \cup T \in \cI$. 

Let $S = \Delta_L \cup T$. We show that $S$ satisfies the conditions of the lemma. 

\begin{claim}
\label{claim:IsSolution}
 $S$ is a solution for $\ck$. 
\end{claim}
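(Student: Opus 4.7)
The plan is to verify the two requirements of a solution for $\ck$: independence in the matroid, and total cost at most $\beta$. The independence part is essentially free from the construction preceding the claim, while the budget part reduces to a short cost bookkeeping that uses Property~\ref{p:s} of Definition~\ref{def:Replacement}.

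First I would observe that $S\in \cI$ holds by construction: the choice of $T\subseteq Q\setminus \Delta_L$ via Observation~\ref{ob:matroid} was made precisely so that $\Delta_L \cup T \in \cI$. So the only real work is to show $c(S)\leq \beta$.

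For the budget, I would split $c(S)=c(\Delta_L)+c(T)$ (the union is disjoint since $T\subseteq Q\setminus \Delta_L$), and then write $c(\Delta_L)=c(L\setminus H)+c(Z_L)$, again using that the two pieces defining $\Delta_L$ are disjoint because $Z_L$ is a replacement (so $Z_L$ is chosen to play the role of $L\cap H$, and $(L\setminus H)\cup Z_L$ is counted as a disjoint union in Definition~\ref{def:Replacement}; if a formal disjointness is needed, one can intersect with $L\setminus H$ explicitly). Applying Property~\ref{p:s} of Definition~\ref{def:Replacement} gives $c(Z_L)\leq c(L\cap H)$, hence
\begin{equation*}
c(\Delta_L)\leq c(L\setminus H)+c(L\cap H)=c(L).
\end{equation*}

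Finally, since $T\subseteq Q$ and $L$, $J_{i^*}$, $Q$ partition $\OPT$, I would conclude
\begin{equation*}
c(S)\leq c(L)+c(T)\leq c(L)+c(Q)\leq c(L)+c(J_{i^*})+c(Q)=c(\OPT)\leq \beta,
\end{equation*}
where the last inequality holds because $\OPT$ is a solution for $\ck$. Combined with $S\in \cI$, this shows $S$ is a solution. I do not anticipate an obstacle here; the only subtlety is being careful that the two unions defining $S$ and $\Delta_L$ are disjoint so that costs add, which follows directly from $T\subseteq Q\setminus \Delta_L$ and from how $Z_L$ replaces $L\cap H$ in $\Delta_L$.
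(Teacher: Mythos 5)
Your proof is correct and follows essentially the same route as the paper: independence comes for free from the construction of $T$ via Observation~\ref{ob:matroid}, and the budget bound is the chain $c(S)\leq c(Z_L)+c(L\setminus H)+c(T)\leq c(L)+c(Q)\leq c(\OPT)\leq\beta$ using Property~\ref{p:s}. One small caution: Definition~\ref{def:Replacement} does not actually guarantee $(L\setminus H)\cap Z_L=\emptyset$ (that disjointness is only stipulated for \emph{substitutions}, Definition~\ref{def:sub}), so $c(\Delta_L)=c(L\setminus H)+c(Z_L)$ should be stated as the inequality $c(\Delta_L)\leq c(L\setminus H)+c(Z_L)$, which is all you need and is exactly what the paper writes; you already flag this yourself, so the argument stands.
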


\begin{claimproof}
By the definition of $T$ it holds that $S = \Delta_L \cup T \in \cI$. Moreover, 	
\[
\begin{array}{ll}
c(S) &= c(\Delta_L\cup T) \\
& \leq c(Z_L)+ c(L \setminus H) +c(T) \\
& \leq c(L \cap H)+c(L \setminus H)+c(T) \leq c(L)+c(Q) \\ 
& \leq c(\OPT)\\
& \leq \beta.
\end{array}
\]
The second inequality holds since $c(Z_L) \leq c(L \cap H)$ (see Property~\ref{p:s} of Definition~\ref{def:Replacement}). For the third inequality, recall that $T \subseteq Q$. The last inequality holds since $\OPT$ is a solution for $\ck$. 
\end{claimproof} 

The proof of the next claim relies on
the profit gap between the elements in $Q$ and $L$. 
	\begin{claim}
	\label{clam:profitBound1}
	$p\left(Q \setminus T\right) \leq \eps \cdot \OPT(\ck)$.
\end{claim}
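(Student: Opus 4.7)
The plan is to bound $|Q \setminus T|$ combinatorially and then bound the per-element profit in $Q$ by the profit gap that separates $L$ and $Q$. Everything follows because $T$ was constructed from the exchange property to absorb as much of $Q$ as possible into $\Delta_L$.

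First I would observe that every element of $Q$ has profit at most $\eps^{i^*}\cdot \OPT(\ck)$. Indeed, elements of $Q$ lie in $\OPT\setminus(L\cup J_{i^*})$, so by the definition of $J_i$ in \eqref{J_i}, each such element lies in some $J_k$ with $k\geq i^*+1$ (or has profit at most $\eps^N\cdot\OPT(\ck)\leq \eps^{i^*}\cdot\OPT(\ck)$), and in either case $p(e)\leq \eps^{i^*}\cdot \OPT(\ck)$.

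Second, I would bound the cardinality of $Q\setminus T$. By the choice $|T|=\max\{|Q|-|\Delta_L|,0\}$, we get $|Q\setminus T|=|Q|-|T|\leq |\Delta_L|$. Using Property~\ref{p:car} of Definition~\ref{def:Replacement} for the replacement $Z_L$ of $L$, namely $|Z_L|\leq |L\cap H|$, we obtain
\[
|\Delta_L|=|(L\setminus H)\cup Z_L|\leq |L\setminus H|+|Z_L|\leq |L\setminus H|+|L\cap H|=|L|.
\]
Finally, I would reuse the inequality $|L|\leq \eps^{-(i^*-1)}$ established in the proof of Claim~\ref{clam:Iq} (it is the same chain $|L|\leq p(L)/(\eps^{i^*-1}\OPT(\ck))\leq \eps^{1-i^*}$). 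Putting these three bounds together:
\[
p(Q\setminus T)\leq |Q\setminus T|\cdot \eps^{i^*}\OPT(\ck)\leq |L|\cdot \eps^{i^*}\OPT(\ck)\leq \eps^{1-i^*}\cdot \eps^{i^*}\OPT(\ck)=\eps\cdot \OPT(\ck).
\]

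I do not expect any real obstacle here; the argument is essentially a bookkeeping step that exploits the profit gap built into the partition $L$, $J_{i^*}$, $Q$. The only subtlety to flag is that elements of $Q$ need not lie in any $J_k$ (if their profit is below $\eps^N\cdot\OPT(\ck)$), but this case only makes the bound easier. The cardinality bookkeeping hinges on Property~\ref{p:car} of the replacement, which is precisely why that property was included in Definition~\ref{def:Replacement}.
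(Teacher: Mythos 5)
Your proof is correct and follows the same route as the paper: bound each $p(e)$ for $e\in Q$ by $\eps^{i^*}\OPT(\ck)$, bound $|Q\setminus T|\leq|\Delta_L|\leq|L|$ via Property~\ref{p:car} of the replacement, and combine with the size bound on $L$. The only cosmetic difference is that you invoke $|L|\leq\eps^{1-i^*}$ directly, whereas the paper routes the same estimate through $|L|\cdot\eps^{i^*}\OPT(\ck)\leq\eps\cdot p(L)\leq\eps\cdot\OPT(\ck)$; these are the same inequality unpacked differently.
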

\begin{claimproof} 

	Observe that 
	\begin{equation}
		\label{eq:menS}
		|Q \setminus T| \leq |\Delta_L| \leq |Z_L|+|L \setminus H| \leq |L \cap H|+|L \setminus H| \leq |L|.
	\end{equation} 
	The first inequality follows from the definition of $T$. For the third inequality, we use Property~\ref{p:car} of Definition~\ref{def:Replacement}. Hence, 
	\begin{equation*}
		\begin{aligned}
			p(Q \setminus T) \leq{} & |Q \setminus T| \cdot \eps^{i^*} \cdot \OPT(\ck) \leq  |L| \cdot \eps^{i^*} \cdot \OPT(\ck) \leq \eps \cdot p(L) \leq \eps \cdot \OPT(\ck).
		\end{aligned}
	\end{equation*} 
	
	The first inequality holds since $p(e) \leq \eps^{i^*} \cdot \OPT(\ck)$ for all $e \in Q$. 
	The second inequality is by \eqref{eq:menS}. The third inequality holds since $p(e) > \eps^{i^*-1} \cdot \OPT(\ck)$ for all $e \in L$. 
\end{claimproof} 	

\begin{claim}
	\label{clam:profitBound2}
	$p\left(S\right) \geq (1-3\eps) \OPT(\ck)$.
\end{claim}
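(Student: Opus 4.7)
The plan is to decompose $p(S)$ along the partition $S = \Delta_L \sqcup T$ (these sets are disjoint by the choice $T \subseteq Q \setminus \Delta_L$) and bound each piece separately, charging three distinct $\eps$-losses: the loss from the replacement on $L$, the loss from the discarded ``middle'' bucket $J_{i^*}$, and the loss from the elements of $Q$ thrown away when truncating to $T$.

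First I would write $p(S) = p(\Delta_L) + p(T)$. For $p(\Delta_L)$, Property~\ref{p:p} of Definition~\ref{def:Replacement} applied to the replacement $Z_L$ of $L$ gives $p(\Delta_L) = p((L\setminus H)\cup Z_L) \geq (1-\eps)\,p(L)$. For $p(T)$, I would use $p(T) = p(Q) - p(Q\setminus T)$ together with Claim~\ref{clam:profitBound1} to conclude $p(T) \geq p(Q) - \eps\cdot\OPT(\ck)$. Adding these two bounds,
\begin{equation*}
p(S) \;\geq\; (1-\eps)\,p(L) + p(Q) - \eps\cdot\OPT(\ck) \;=\; p(L) + p(Q) - \eps\,p(L) - \eps\cdot\OPT(\ck).
\end{equation*}

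Next I would relate $p(L)+p(Q)$ back to $\OPT(\ck)$ via the partition $\OPT = L \cup J_{i^*} \cup Q$, which yields $p(L)+p(Q) = \OPT(\ck) - p(J_{i^*}) \geq (1-\eps)\cdot\OPT(\ck)$ by the defining property of $i^*$ established in the paragraph preceding Claim~\ref{clam:Iq}. Substituting, and using the trivial bound $p(L) \leq \OPT(\ck)$, gives
\begin{equation*}
p(S) \;\geq\; (1-\eps)\cdot\OPT(\ck) - \eps\cdot\OPT(\ck) - \eps\cdot\OPT(\ck) \;=\; (1-3\eps)\cdot\OPT(\ck),
\end{equation*}
which is exactly the claimed bound.

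There is essentially no real obstacle here, since the heavy lifting (the existence of the replacement, the cardinality bound $|L|\leq q(\eps)$, and the bound on $p(Q\setminus T)$) has already been carried out in the preceding claims. The only thing to be slightly careful about is the bookkeeping that confirms $\Delta_L$ and $T$ are disjoint, so that their profits add without double-counting; this is immediate from $T \subseteq Q \setminus \Delta_L$ as guaranteed by the application of Observation~\ref{ob:matroid}. The rest is just the arithmetic assembling the three $\eps$-losses into the final $3\eps$ slack.
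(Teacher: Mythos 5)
Your proof is correct and follows essentially the same approach as the paper: the same decomposition $p(S)=p(\Delta_L)+p(T)$, the same bound on $p(\Delta_L)$ via Property~\ref{p:p} of Definition~\ref{def:Replacement}, the same bound on $p(T)$ via Claim~\ref{clam:profitBound1}, and the same use of $p(J_{i^*})\leq\eps\cdot\OPT(\ck)$. The only difference is a trivial reshuffling of the final arithmetic (you bound $\eps\,p(L)\leq\eps\cdot\OPT(\ck)$ directly, whereas the paper weakens $p(Q)$ to $(1-\eps)p(Q)$ so as to factor out $(1-\eps)p(L\cup Q)$), which leads to the identical $3\eps$ slack.
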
 

\begin{claimproof} 
	By Property~\ref{p:p} of Definition~\ref{def:Replacement},
	\begin{equation}
			\label{eq:proofProfit1}
		\begin{aligned}
			 p(\Delta_L) ={} & p((L \setminus H) \cup Z_L) \geq (1-\eps) \cdot p(L).
		\end{aligned}
	\end{equation} Moreover, 
		\begin{equation}
			\label{eq:proofProfit2}
		\begin{aligned}
			 p(T) {} & \geq p(Q) - p(Q \setminus T) \geq p(Q) - \eps \cdot \OPT(\ck) \geq (1-\eps) \cdot p(Q) - \eps \cdot \OPT(\ck). 
		\end{aligned}
	\end{equation} 
	
	The second inequality is by Claim~\ref{clam:profitBound1}. Using \eqref{eq:proofProfit1} and \eqref{eq:proofProfit2}, we have that 
	
\[	
\begin{array}{ll}
p(\Delta_L)+p(T) & \geq (1-\eps) \cdot p(L \cup Q)-\eps \cdot \OPT(\ck) 
\\ & = (1-\eps) \cdot p(\OPT \setminus J_{i^*})-\eps \cdot \OPT(\ck) \\
&  \geq (1-3\eps) \cdot \OPT(\ck). 
\end{array}
\]

The last inequality holds since $p(J_{i^*}) \leq \eps \cdot \OPT(\ck)$. Observe that 
$S = \Delta_L \cup T$ and $T \cap \Delta_L = \emptyset$. Therefore,  $$p(S) = p(\Delta_L)+p(T) \geq (1-3\eps) \cdot \OPT(\ck).$$

\end{claimproof} 
 Finally, we note that, by \eqref{J_i} and the definition of $Q$, $Q \cap H = \emptyset$. Consequently, by the definition of $S$, we have that $S \cap H \subseteq Z_L$. 
 As $Z_L \subseteq R$ (by Definition~\ref{def:Representatives}), it follows that $S \cap H \subseteq R$. Hence, using Claims~\ref{claim:IsSolution} and~\ref{clam:profitBound2},
 we have the statement of the lemma.  \qed

Our scheme for BMI constructs a representative set whose cardinality depends solely on $\eps$. To this end, we first partition the profitable elements (and possibly some more elements) into a small number of {\em profit classes}, where elements from the same profit class have {\em similar} profits. 
The profit classes are derived from a $2$-approximation $\alpha$ for $\OPT(\ck)$, which can be easily computed in polynomial time. specifically, for all $r \in [\log_{1-\eps} \left(\frac{\eps}{2}\right)+1]$ define the $r$-{\em profit class} as \begin{equation}
	\label{Er}
	C_{r}(\alpha) = \left\{e \in E~\bigg|~ \frac{ p(e)}{2 \cdot \alpha} \in \big( (1-\eps)^{r}, (1-\eps)^{r-1} \big]\right\}.
\end{equation} 

For each  
$r \in [\log_{1-\eps} \left(\frac{\eps}{2}\right)+1]$, we define $\left[ (E, \cI) \cap C_{r}(\alpha) \right]_{\leq q(\eps)}$, the {\em corresponding matroid} for the
$r$-profit class.
We construct a representative set by computing a minimum basis w.r.t. the cost function $c$, for the matroid defined as the union of the corresponding matroids of all profit classes. Note that by Lemma~\ref{lem:prel} the latter is a matroid.  
The pseudocode of algorithm $\textsf{FindRep}$, which outputs a representative set, is given in Algorithm~\ref{alg:representative}.

\begin{algorithm}[h]
	\caption{$\textsf{FindRep}(\ck = (E,\cI,c,p,\beta),\eps)$}
	\label{alg:representative}
	
	Compute a $2$-approximation $S^*$ for $\ck$ using a PTAS for BMI with parameter $\eps' = \frac{1}{2}$.\label{step:APP}
	
	 Set $\alpha = p(S^*)$. \label{step:alpha}
	
		Return a minimum basis w.r.t. $c$ of the matroid $$\bigvee_{r \in [\log_{1-\eps} \left(\frac{\eps}{2}\right)+1]} \left[ (E, \cI) \cap C_{r}(\alpha) \right]_{\leq q(\eps)}.$$\label{step:Br}
\end{algorithm}

\begin{lemma}
	\label{lem:main}
	Given a \textnormal{BMI} instance $\ck = (E,\cI,c,p,\beta)$ and $0<\eps <\frac{1}{2}$, Algorithm~\ref{alg:representative} returns in time $\textnormal{poly}(|\ck|)$ 
	 a representative set $R$ of $\ck$ and $\eps$, such that $|R| \leq q(\eps) \cdot \left(\log_{1-\eps} \left(\frac{\eps}{2}\right)+1\right) $.
\end{lemma}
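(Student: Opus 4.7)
The plan is to establish the three claims of the lemma in sequence: size, running time, and representativity.

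For the size bound, I observe that the profit classes $C_r(\alpha)$ are pairwise disjoint, since they correspond to disjoint intervals of $p(e)/(2\alpha)$. Hence the ground sets $C_r(\alpha)$ of the component matroids $M_r := [\cm\cap C_r(\alpha)]_{\leq q(\eps)}$ are disjoint, and the matroid union $M=\bigvee_r M_r$ collapses to a direct sum. Every basis of $M$ therefore decomposes as $R=\bigcup_r R_r$, where $R_r$ is a basis of $M_r$, and $|R_r|\leq q(\eps)$ from the truncation; summing over the at most $\log_{1-\eps}(\eps/2)+1$ profit classes yields the stated cardinality bound. The running time follows because the PTAS invoked in Step~\ref{step:APP} with fixed $\eps'=\frac{1}{2}$ runs in polynomial time, and a minimum basis of each component matroid can be computed in polynomial time by the standard matroid greedy algorithm.

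For representativity, fix $G\in\cI_{\leq q(\eps)}$ and let $G^- := G\setminus H$ and $G^+ := G\cap H$. The inequality $\alpha\geq \OPT(\ck)/2$ guarantees $p(e)/(2\alpha)>\eps/2$ for every profitable element $e$, and so $e$ belongs to some $C_r(\alpha)$ with $r$ in the chosen range; writing $G_r := G^+\cap C_r(\alpha)$ gives $G^+=\bigcup_r G_r$. Each $G_r\in \cI$ by the hereditary property and has $|G_r|\leq q(\eps)$, so $G_r$ is independent in $M_r$. I would then invoke the classical min-cost-basis exchange property inside each component $M_r$ to produce $Z_r\subseteq R_r:=R\cap C_r(\alpha)$ with $|Z_r|=|G_r|$ and $c(Z_r)\leq c(G_r)$, and set $Z_G:=\bigcup_r Z_r$. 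Properties~\ref{p:s} and~\ref{p:car} of Definition~\ref{def:Replacement} follow by summing across $r$. For Property~\ref{p:p}, the inclusions $G_r,Z_r\subseteq C_r(\alpha)$ place their elements in a profit interval of multiplicative width $(1-\eps)$, yielding $p(Z_r)\geq(1-\eps)p(G_r)$ per class and hence $p(G^-\cup Z_G)\geq p(G^-)+(1-\eps)p(G^+)\geq (1-\eps)p(G)$.

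The main obstacle is Property~\ref{p:I}, namely $\cm$-independence of $G^-\cup Z_G$; the size part $|G^-\cup Z_G|\leq|G|\leq q(\eps)$ is immediate from Property~\ref{p:car}. Unlike cost, cardinality, and profit, independence does \emph{not} decompose across profit classes: a set can be independent in every $M_r$ yet dependent in $\cm$. Consequently, the $Z_r$'s cannot be chosen class by class in isolation, and must instead be selected jointly so the exchanges integrate into a single $\cm$-independent set. The strategy is to realize the per-class min-basis exchanges as a sequence of one-element swaps carried out inside $\cm$ itself: starting from $G\in\cI$, for each element of each $G_r$ one uses the minimality of $R_r$ under $c$ together with matroid exchange in $\cm$ to identify an element of $R_r$ of no greater cost whose swap preserves independence of the running set. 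Orchestrating these swaps consistently across classes is the crux of the argument, and is carried out in Section~\ref{sec:lemMainProof}.
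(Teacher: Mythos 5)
Your treatment of the cardinality bound, the running time, and Properties~\ref{p:s}, \ref{p:p}, \ref{p:car} of Definition~\ref{def:Replacement} is correct and matches the paper's. You also correctly identify the one nontrivial obligation, Property~\ref{p:I}: independence of $(G\setminus H)\cup Z_G$ must hold in the global matroid $(E,\cI_{\leq q(\eps)})$, not merely in each component $M_r$, so the per-class exchanges must be coordinated rather than carried out in isolation.

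The difficulty is that you stop exactly at that point: you assert the swaps can be ``orchestrated consistently across classes'' and defer to Section~\ref{sec:lemMainProof}, which is the very proof you were asked to supply blind. The existence of each swap is not a formality, and the two facts needed to produce it are absent from your argument. First, if $B$ is a minimum-cost basis of a matroid $\cN$ and $a\notin B$, then $\{e\in B\mid c(e)\leq c(a)\}+a$ is \emph{dependent} in $\cN$; this is stronger than the usual greedy exchange statement and is what guarantees a candidate of cost at most $c(a)$. Second, you need a generalized exchange step: if $A,B\in\cI$, $a\in A\setminus B$, and $B+a\notin\cI$, then there exists $b\in B\setminus A$ with $A-a+b\in\cI$. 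Applying the first with $\cN=[(E,\cI)\cap C_r(\alpha)]_{\leq q(\eps)}$ and $B=R\cap C_r(\alpha)$, and the second with $A=(G\setminus H)\cup Z$, is what yields a same-class element $b\in (R\cap C_r(\alpha))\setminus A$ of no greater cost with $A-a+b\in\cI_{\leq q(\eps)}$. The paper then wraps the iteration in an extremal argument: it introduces a notion of \emph{substitution} (your $Z_G$ with global independence, per-class cardinality preservation, and disjointness from $G\setminus H$), observes $G\cap H$ is always one so the family is nonempty, takes a substitution maximizing $|Z\cap R|$, and derives a contradiction from the swap. Your sequential phrasing could work equally well, but the two lemmas above are the missing substance; without them the ``crux'' remains unproved.

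A small side remark: placing every profitable $e$ in some $C_r(\alpha)$ uses both directions of the $2$-approximation, and you attribute them the wrong way. It is $\alpha\leq\OPT(\ck)$ that gives $p(e)/(2\alpha)>\eps/2$ (since $p(e)>\eps\cdot\OPT(\ck)\geq\eps\alpha$), while $\alpha\geq\OPT(\ck)/2$ gives $p(e)/(2\alpha)\leq 1$.
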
 

In the following we outline the main ideas of the proof.
Let $R = \textsf{FindRep}(\ck,\eps)$, and consider some $G \in \cI_{\leq q(\eps)}$ and $e \in (G \cap H) \setminus R$.  By \eqref{Er}  the element $e$ belongs to some profit class $C_r(\alpha), r \in [\log_{1-\eps} \left(\frac{\eps}{2}\right)+1]$. Since $R$ is a minimum basis w.r.t. $c$, we can use matroid properties to show that there is some $b \in C_r(\alpha) \cap R$ of cost $c(b) \leq c(e)$ such that $G-a+b \in \cI_{\leq q(\eps)}$. 
We now keep replacing elements in $(G \cap H) \setminus R$ in a similar manner, until no such element exists. Thus, we have a replacement of $G$ within $R$, i.e.,
$R$ is a representative set. A formal proof of Lemma~\ref{lem:main} is given in Section~\ref{sec:lemMainProof}.

Our scheme uses the representative set to select profitable elements for the solution. Using a linear program, the solution is extended to include also non-profitable elements.
As the exact set of non-profitable elements is unknown, we use an approximation for the optimal profit. Specifically, let $\frac{\OPT(\ck)}{2} \leq \alpha \leq \OPT(\ck)$, and
denote by $E(\alpha) = \{e \in E~|~ p(e) \leq 2\eps \cdot \alpha\}$ the set 
including the non-profitable elements, and possibly also profitable elements $e$ 
such that $p(e) \leq 2 \eps \cdot \OPT(\ck)$. 	

The LP is based on the matroid polytope of the following matroid. Given a solution $F$ for $\ck$, we define $\cm_F (\alpha) = \left( (E,\cI)/F \right)  \cap E(\alpha)  $; in other words,  $\cm_F(\alpha) = (E(\alpha),\cI_F(\alpha))$, where  $\cI_F(\alpha) = \{A \subseteq E(\alpha)~|~ A \cup F \in \cI\}$. Note that $\cm_{F}(\alpha)$ is indeed a matroid, by Properties~\ref{prop:restriction} and~\ref{prop:contraction} of Lemma~\ref{lem:prel}.
The LP formulation is given by 
	\begin{equation}
		\label{LP}
		\begin{aligned}
			\textnormal{LP}(\ck,F,\alpha)~~~~~~~~~~~~~~~~~ \max\quad        & ~~~\sum_{e \in E(\alpha) \setminus F} \bar{x}_{e}   \cdot p(e)             ~~~~~~~~~~~~~~~~~~~~~                                                \\
			\textsf{s.t.\quad} & ~~~\sum_{e \in E(\alpha) \setminus F} \bar{x}_{e}  \cdot c(e)         \leq  \beta-c(F) ~~~~~~~~~~~~~~~~~~~~~           \\  
			& ~~~\bar{x} \in P_{\cm_F (\alpha)} \\
		\end{aligned}
	\end{equation}  
The linear program $\textnormal{LP}(\ck,F,\alpha)$  maximizes the total profit of a point in the matroid polytope of $\cm_{F} (\alpha)$ (i.e., $P_{\cm_F(\alpha)}$), such that the total cost of elements is at most $\beta-c(F)$; that is, the residual budget 
after selecting for the solution the elements in $F$. 
\begin{observation}
	\label{ob:LP}
	Let $\ck = (E,\cI,c,p,\beta)$ be a \textnormal{BMI} instance, $\frac{\OPT(\ck)}{2} \leq \alpha \leq \OPT(\ck)$, $S$ a solution for $\ck$, and $\bar{x}$ an optimal  basic solution for $\textnormal{LP}(\ck,S \cap H,\alpha)$. Then, $\sum_{e \in E(\alpha) \setminus (S \cap H)} \bar{x}_{e}   \cdot p(e) \geq p\left(S \setminus H \right)$. 
\end{observation}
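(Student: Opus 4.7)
The plan is to exhibit an explicit feasible point for $\textnormal{LP}(\ck,S\cap H,\alpha)$ whose objective value equals $p(S\setminus H)$, and then invoke the optimality of $\bar{x}$. The natural candidate is the indicator vector $\mathbbm{1}^{S\setminus H}$, so the bulk of the proof is just verifying that this vector is feasible for the LP and computing its objective.

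First I would check that $S\setminus H \subseteq E(\alpha)\setminus (S\cap H)$. The disjointness with $S\cap H$ is immediate. To see that $S\setminus H \subseteq E(\alpha)$, observe that any $e \in E\setminus H$ satisfies $p(e)\leq \eps\cdot \OPT(\ck)$; since $\alpha \geq \OPT(\ck)/2$, this gives $p(e)\leq 2\eps\cdot \alpha$, so indeed $e\in E(\alpha)$. In particular, the indicator vector $\mathbbm{1}^{S\setminus H}$ is supported on $E(\alpha)\setminus(S\cap H)$, as required by the LP.

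Next I would verify the two LP constraints. For the budget constraint, since $S$ is a solution we have $c(S)\leq \beta$, and hence
\[
\sum_{e \in S\setminus H} c(e) = c(S) - c(S\cap H) \leq \beta - c(S\cap H),
\]
which is exactly the right-hand side of the LP budget constraint. For the matroid polytope constraint, I need $\mathbbm{1}^{S\setminus H}\in P_{\cm_{S\cap H}(\alpha)}$. Recall $\cI_{S\cap H}(\alpha) = \{A\subseteq E(\alpha)\mid A\cup (S\cap H)\in \cI\}$. Taking $A=S\setminus H$, we have $A\subseteq E(\alpha)$ from the previous paragraph, and $A\cup (S\cap H) = S \in \cI$ since $S$ is a solution for $\ck$. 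Thus $S\setminus H$ is an independent set of $\cm_{S\cap H}(\alpha)$, so its indicator vector lies in the matroid polytope.

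Finally, the objective value of $\mathbbm{1}^{S\setminus H}$ in $\textnormal{LP}(\ck,S\cap H,\alpha)$ is $\sum_{e\in S\setminus H} p(e) = p(S\setminus H)$. Since $\bar{x}$ is an optimal solution to this LP and $\mathbbm{1}^{S\setminus H}$ is feasible, optimality yields
\[
\sum_{e\in E(\alpha)\setminus (S\cap H)} \bar{x}_e \cdot p(e) \geq p(S\setminus H),
\]
which is the claim. There is no real obstacle here; the only point needing care is the use of the assumption $\alpha \geq \OPT(\ck)/2$ to ensure $E\setminus H \subseteq E(\alpha)$, which is what guarantees that the ``non-profitable part'' of $S$ lives inside the ground set of the LP.
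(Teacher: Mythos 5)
Your proof is correct and is exactly the natural argument: exhibit $\mathbbm{1}^{S\setminus H}$ as a feasible point of the LP with objective value $p(S\setminus H)$, verifying the containment $S\setminus H\subseteq E(\alpha)$ via $\alpha\geq\OPT(\ck)/2$, the budget constraint via $c(S)\leq\beta$, and the matroid constraint via $S\in\cI$. The paper states this as an observation without an explicit proof, and your argument is precisely what is implicitly intended.
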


It is folklore that a linear program such as $\textnormal{LP}(\ck,F,\alpha)$ can be solved in polynomial-time in $|\ck|$. 
As we could not find a proper reference, we include the proof of the next lemma in the appendix. 
\begin{lemma}
	\label{lem:LPrunning}
	For any   \textnormal{BMI} instance $\ck = (E,\cI,c,p,\beta)$, $\frac{\OPT(\ck)}{2} \leq \alpha \leq \OPT(\ck)$, and a solution $F$ of $\ck$, a basic optimal solution of $\textnormal{LP}(\ck,F,\alpha)$ can be found in time $\textnormal{poly}(|\ck|)$. 
\end{lemma}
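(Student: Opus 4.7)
}
The strategy is to invoke the equivalence between separation and optimization due to Gr\"otschel, Lov\'asz and Schrijver. The feasible region of $\textnormal{LP}(\ck,F,\alpha)$ is the intersection of the matroid polytope $P_{\cm_F(\alpha)}$ with a single linear knapsack-type inequality $\sum_{e} \bar{x}_e c(e) \leq \beta - c(F)$. If we can produce a polynomial-time separation oracle for this convex body, then the ellipsoid method yields an optimal solution in polynomial time, and standard post-processing (described below) yields an optimal vertex, i.e.\ a basic optimal solution.

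First I would build an independence oracle for $\cm_F(\alpha) = ((E,\cI)/F) \cap E(\alpha)$ out of the independence oracle for $(E,\cI)$ that is supplied with $\ck$: for any $A\subseteq E(\alpha)$, a single query of the form ``$A \cup F \in \cI$?'' suffices. This immediately gives a polynomial-time algorithm for the rank function of $\cm_F(\alpha)$ via the greedy algorithm, since that function is submodular and nonnegative integer-valued with value at most $|E|$. Given the rank function, separation over the matroid polytope $P_{\cm_F(\alpha)} = \{\bar{x}\ge 0 : \bar{x}(S) \le \mathrm{rank}(S)\ \forall S\subseteq E(\alpha)\}$ reduces to minimizing $\mathrm{rank}(S) - \bar{x}(S)$ over subsets $S\subseteq E(\alpha)$, a submodular function minimization that is solvable in $\textnormal{poly}(|\ck|)$ time (e.g.\ by Cunningham's algorithm, or Schrijver's / Iwata--Fleischer--Fujishige's algorithms). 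The knapsack inequality and the nonnegativity constraints are trivially separable, so we obtain an overall separation oracle for the LP's feasible region in $\textnormal{poly}(|\ck|)$ time. Applying the ellipsoid method then produces an optimal (not necessarily basic) solution $\bar{x}^*$ and the optimal value $v^*$ in $\textnormal{poly}(|\ck|)$ time.

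To upgrade $\bar{x}^*$ to a basic optimum, I would follow the standard ``solve-and-tighten'' reduction: repeatedly solve the LP with the additional equation $\sum_e \bar{x}_e p(e) = v^*$ and minimize/maximize auxiliary linear functions (for example, successive coordinate directions) to drive the solution to an extreme point of the face of optima. Equivalently, once $v^*$ is known, one can invoke the constructive version of the GLS theorem which, given a polynomial separation oracle and a polynomial bound on the encoding length of vertices of the polyhedron, directly produces an optimal vertex; the encoding length of any vertex of the matroid polytope intersected with one additional rational inequality is polynomial in $|\ck|$, so this applies.

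The main obstacle I anticipate is purely technical rather than conceptual: namely, carefully verifying that a basic solution (not merely an optimal one) can be extracted in polynomial time while handling the single non-matroid budget inequality. The matroid polytope itself is integral, but the extra knapsack constraint may introduce one fractional column in a basic solution, and one must confirm that the post-processing step preserves both optimality and basicness. This can be addressed by the standard argument that a basic feasible solution of a system ``matroid polytope $\cap$ one linear inequality'' is determined by a laminar family of tight rank inequalities together with at most one tight knapsack inequality, so iterating the separation-plus-tightening procedure for at most $|E|$ rounds produces the desired basic optimum.
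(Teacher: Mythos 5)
Your proposal is correct and follows essentially the same route as the paper: both reduce the problem to separation-plus-optimization via the Gr\"otschel--Lov\'asz--Schrijver framework, implementing a separation oracle over the feasible region (budget constraint plus matroid polytope) and then invoking the ellipsoid method. The paper is slightly more direct on two points where you improvise: it cites Schrijver's Theorem~40.4 for the matroid-polytope separation step rather than reducing to general submodular function minimization, and it cites the vertex-producing form of the GLS optimization theorem (Thm~6.4.9 and Remark~6.5.2 in~\cite{GLS93}) so that the basic optimal solution comes out directly, with no post-processing or ``solve-and-tighten'' loop needed -- you correctly identified this alternative in your final paragraph, and the paper uses exactly it.
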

The next lemma will be 
useful for deriving a solution of high profit using $\textnormal{LP}(\ck,F,\alpha)$.
The proof follows as a special case of a result of \cite{GZ10}.

 \begin{lemma}
		\label{lem:integral}
		Let $\ck = (E,\cI,c,p,\beta)$ be a \textnormal{BMI} instance, $\frac{\OPT(\ck)}{2} \leq \alpha \leq \OPT(\ck)$, a solution $F$ of $\ck$, and $\bar{x}$  a basic solution for $\textnormal{LP}(\ck,F,\alpha)$. Then $\bar{x}$ has at most two non-integral entries. 
	\end{lemma}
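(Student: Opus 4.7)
The strategy is the standard extreme-point analysis for matroid polytopes intersected with a single linear constraint, which is the $d=1$ specialization of the structural result of \cite{GZ10}. Write $S_0 = \{e : \bar{x}_e = 0\}$, $S_1 = \{e : \bar{x}_e = 1\}$, and $S_f = \{e : 0 < \bar{x}_e < 1\}$, and set $n = |S_f|$; the goal is to show $n \leq 2$. First I would contract $S_1$ and delete $S_0$ in $\cm_F(\alpha)$ to obtain a matroid $M'$ on ground set $S_f$; the restriction $\bar{x}'$ of $\bar{x}$ to $S_f$ is a basic solution of the induced LP over $P_{M'}$ together with the residual knapsack constraint (right-hand side $\beta - c(F) - c(S_1)$), and in addition \emph{every} coordinate of $\bar{x}'$ lies strictly in $(0,1)$. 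In particular, no bound constraint $x_e = 0$ or $x_e = 1$ is tight at $\bar{x}'$, so all information about basicity comes from rank inequalities and the knapsack.

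The tight constraints at $\bar{x}'$ are therefore rank equalities $\sum_{e \in U} x_e = r_{M'}(U)$ for certain $U \subseteq S_f$, together with possibly the single knapsack equality. By the standard uncrossing argument based on submodularity of the matroid rank function (see \cite{schrijver2003combinatorial}), I may assume the tight rank equalities are indexed by a laminar family $\cL$ whose indicator vectors are linearly independent. For $U \in \cL$ with children $V_1,\dots,V_s$ in $\cL$, let $D(U) = U \setminus \bigcup_i V_i$ denote the set of \emph{direct} elements of $U$. Subtracting the tight equalities for $V_1,\dots,V_s$ from the one for $U$ gives
\[
\sum_{e \in D(U)} \bar{x}'_e \;=\; r_{M'}(U) - \sum_{i=1}^{s} r_{M'}(V_i),
\]
whose right-hand side is a nonnegative integer. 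If $|D(U)|=0$ then the equality for $U$ is a linear combination of those for the $V_i$, contradicting linear independence of $\cL$; if $|D(U)|=1$ then a single coordinate $\bar{x}'_e$ is forced to equal an integer, contradicting $\bar{x}'_e \in (0,1)$. Hence $|D(U)| \geq 2$ for every $U \in \cL$.

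Since the sets $\{D(U) : U \in \cL\}$ are pairwise disjoint subsets of $S_f$, the lower bound $|D(U)| \geq 2$ yields $2|\cL| \leq \sum_{U \in \cL} |D(U)| \leq n$, so $|\cL| \leq n/2$. Basicity of $\bar{x}'$ requires $n$ linearly independent tight constraints; together with at most one additional knapsack constraint we obtain $n \leq |\cL| + 1 \leq n/2 + 1$, and therefore $n \leq 2$, as claimed. The main technical point is the disjointness-and-integrality argument showing $|D(U)| \geq 2$, which crucially combines the laminar structure with integrality of matroid ranks; the reduction from an arbitrary family of tight rank inequalities to a linearly independent laminar one via uncrossing is a classical fact that I would simply cite rather than reproduce.
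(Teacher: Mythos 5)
Your proof is correct and it is exactly the $d=1$ specialization of the vertex-sparsity argument of Grandoni and Zenklusen~\cite{GZ10}: fix the $0/1$ coordinates via deletion and contraction, uncross the tight rank constraints to a laminar (or chain) family of linearly independent tight sets, observe that each set must contribute at least two ``direct'' fractional elements, and count against the single budget row. The paper itself supplies no internal proof for this lemma---it simply cites \cite{GZ10}---so your write-up is a faithful reconstruction of the argument the paper delegates to that reference rather than an alternative route.
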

	
Using the above, we have the required components for an EPTAS for BMI. Let $R$ be the representative set returned by $\textsf{FindRep}(\ck,\eps)$. For all solutions $F \subseteq R$ with $|F| \leq \eps^{-1}$, we find a basic optimal solution $\bar{\lambda}^F$ for $\textnormal{LP}(\ck,F,\alpha)$ and define $C_F =  \{e \in E(\alpha) \setminus F~|~ \bar{\lambda}^F_{e} = 1\} \cup F$ as the  {\em solution} of $F$. 
Our scheme iterates over the solutions $C_F$ for all such subsets $F$ and chooses
a solution $C_{F^*}$ of maximal total profit.
The pseudocode of the scheme is given in Algorithm~\ref{alg:EPTAS}.	
	
	\begin{algorithm}[h]
		\caption{$\textsf{EPTAS}(\ck = (E,\cI,c,p,\beta),\eps)$}
		\label{alg:EPTAS}
		
		
		Construct the representative elements $R = \textsf{FindRep}(\ck,\eps)$.\label{step:rep}
		
		Compute a $2$-approximation $S^*$ for $\ck$ using a PTAS for BMI with parameter $\eps' = \frac{1}{2}$.\label{step:APP2}
		
		Set $\alpha = p(S^*)$.
		
		Initialize an empty solution $A \leftarrow \emptyset$.\label{step:init}
		
		\For{$F \subseteq R \textnormal{ s.t. } |F| \leq \eps^{-1} \textnormal{ and } F \textnormal{ is a solution of } \ck $ \label{step:for}}{
			

			Find a basic optimal solution $\bar{\lambda}^F$ of $\textnormal{LP}(\ck,F,\alpha)$.\label{step:vertex}
			
			Let $C_F =  \{e \in E(\alpha) \setminus F~|~ \bar{\lambda}^F_{e} = 1\} \cup F$.\label{step:Cf}

			\If{$p\left(C_F\right) > p(A)$\label{step:iff}}{
				
				Update $A \leftarrow C_F$\label{step:update}
				
			}

		}
		
		Return $A$.\label{step:retA}
	\end{algorithm}

	\begin{lemma}
		\label{thm:EPTAS}
		Given a \textnormal{BMI} instance $\ck = (E,\cI,c,p,\beta)$ and $0<\eps<\frac{1}{2}$, Algorithm~\ref{alg:EPTAS} returns a solution for $\ck$ of profit at least $(1-7\eps) \cdot \OPT(\ck)$. 
	\end{lemma}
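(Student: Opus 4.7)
The plan is to identify a specific $F \subseteq R$ that the for-loop will consider, and argue that the corresponding iterate $C_F$ already achieves the promised profit. By Lemma~\ref{lem:main}, $R = \textsf{FindRep}(\ck,\eps)$ is a representative set, so Lemma~\ref{lem:Solution} supplies a solution $S$ with $S \cap H \subseteq R$ and $p(S) \geq (1-3\eps)\OPT(\ck)$. Set $F^* = S \cap H$. First I would verify that $F^*$ is actually enumerated in Step~\ref{step:for}: since every $e \in H$ has $p(e) > \eps \cdot \OPT(\ck)$, a simple averaging gives $|F^*| \leq |S \cap H| \leq p(S)/(\eps \cdot \OPT(\ck)) \leq \eps^{-1}$, and $F^* \subseteq S$ is a solution of $\ck$ by the hereditary property and monotonicity of the cost. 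Thus $F^*$ appears as some $F$ in the loop.

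Next I would analyze the iteration with $F = F^*$. Since Step~\ref{step:APP2} computes a $2$-approximation, $\frac{\OPT(\ck)}{2} \leq \alpha \leq \OPT(\ck)$. Every $e \in S \setminus H$ is non-profitable, hence $p(e) \leq \eps \cdot \OPT(\ck) \leq 2\eps \cdot \alpha$, so $S \setminus H \subseteq E(\alpha)$. Therefore $\mathbbm{1}^{S \setminus H}$ is a feasible point of $\textnormal{LP}(\ck, F^*, \alpha)$ (independence of $(S \setminus H) \cup F^* = S$ gives membership in $P_{\cm_{F^*}(\alpha)}$, and the budget constraint holds because $S$ is a solution). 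By Observation~\ref{ob:LP}, the basic optimum $\bar{\lambda}^{F^*}$ returned in Step~\ref{step:vertex} then satisfies
\begin{equation*}
\sum_{e \in E(\alpha) \setminus F^*} \bar{\lambda}^{F^*}_e \cdot p(e) \;\geq\; p(S \setminus H).
\end{equation*}

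The main technical step is to pass from the fractional LP value to the integral set $C_{F^*}$ with controlled loss. By Lemma~\ref{lem:integral}, $\bar{\lambda}^{F^*}$ has at most two non-integral entries; each such entry contributes at most $p(e) \leq 2\eps \cdot \alpha \leq 2\eps \cdot \OPT(\ck)$ to the LP objective, so discarding the fractional coordinates costs at most $4\eps \cdot \OPT(\ck)$. Consequently,
\begin{equation*}
p(C_{F^*}) \;=\; p(F^*) + \!\!\!\sum_{\substack{e \in E(\alpha) \setminus F^* \\ \bar{\lambda}^{F^*}_e = 1}}\!\!\! p(e) \;\geq\; p(F^*) + p(S \setminus H) - 4\eps \cdot \OPT(\ck) \;=\; p(S) - 4\eps \cdot \OPT(\ck),
\end{equation*}
which together with $p(S) \geq (1-3\eps)\OPT(\ck)$ yields $p(C_{F^*}) \geq (1-7\eps)\OPT(\ck)$.

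Finally I would check that $C_{F^*}$ is an eligible solution so that the comparison in Step~\ref{step:iff} preserves the bound. Observation~\ref{ob:convexHull} applied in $\cm_{F^*}(\alpha)$ shows that the $1$-valued coordinates of $\bar{\lambda}^{F^*}$ form a set in $\cI_{F^*}(\alpha)$, so $C_{F^*} \in \cI$; dropping the fractional entries only decreases the cost in the LP budget constraint, so $c(C_{F^*}) \leq \beta$. Thus $A$, being the best candidate among all iterations, satisfies $p(A) \geq p(C_{F^*}) \geq (1-7\eps)\OPT(\ck)$, giving the lemma. The most delicate bookkeeping is the $3\eps + 4\eps = 7\eps$ accounting, where the $3\eps$ comes from the representative-set replacement in Lemma~\ref{lem:Solution} and the $4\eps$ from the two fractional LP entries capped at $2\eps\cdot\alpha$ each; I expect no further obstacle beyond keeping these two sources of loss clearly separated.
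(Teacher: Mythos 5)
Your proof is correct and follows essentially the same route as the paper's: invoke Lemma~\ref{lem:Solution} to obtain a near-optimal solution $S$ with $S\cap H\subseteq R$, take $F^*=S\cap H$ as the witness iteration, apply Observation~\ref{ob:LP} and Lemma~\ref{lem:integral} to bound the loss from at most two fractional coordinates by $4\eps\cdot\OPT(\ck)$, and verify feasibility of $C_{F^*}$ via Observation~\ref{ob:convexHull} and the LP budget constraint. The only cosmetic difference is that you re-derive the content of Observation~\ref{ob:LP} by exhibiting $\mathbbm{1}^{S\setminus H}$ as a feasible point, whereas the paper cites the observation directly; and while you phrase feasibility as a check on $C_{F^*}$ alone, your argument applies verbatim to every $C_F$ in the loop, which is what is needed to conclude that the returned $A$ is itself a solution.
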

	\begin{proof}
		By Lemma~\ref{lem:Solution},  there is a solution $S$ for $\ck$ such that $S \cap H \subseteq R$, and $p\left(S\right) \geq (1-3\eps) \OPT(K)$. As for all $e \in S \cap H$ we have $p(e) > \eps \cdot \OPT(\ck)$, and $S$ is a solution for $\ck$, it follows that $|S \cap H| \leq \eps^{-1}$.
		We note that there is an iteration of Step~\ref{step:for} in which $F = S \cap H$; thus, in Step~\ref{step:vertex} we construct a basic optimal solution $\bar{\lambda}^{S \cap H}$ of $\textnormal{LP}(\ck,S \cap H,\alpha)$. We use $X(A) = \{e \in E(\alpha) \setminus A~|~ \bar{\lambda}^{A}_e = 1\}$ for every basic solution $\bar{\lambda}^{A}$ computed in Step~\ref{step:vertex} for $A \subseteq R$.  
		Then,
		 \begin{equation}
			\label{eq:finalProfitA}
			\begin{aligned}
				p\left(X(S \cap H)\right) ={} & \sum_{e \in E(\alpha) \setminus (S \cap H)~\text{s.t.}~\bar{\lambda}^{S \cap H}_{e} = 1} p(e) \\ 
				\geq{} & \sum_{e \in E(\alpha) \setminus (S \cap H)} \bar{\lambda}^{S \cap H}_{e}   \cdot p(e) - 2 \cdot 2\eps \cdot \OPT(\ck) \\ \geq{} &  p(S\setminus H) -4\eps \cdot \OPT(\ck).
			\end{aligned}
		\end{equation}
		 The first inequality holds since, by Lemma~\ref{lem:integral}, $|\{e \in E(\alpha) \setminus (S \cap H)~|~\bar{\lambda}^{S \cap H}_{e} \in (0,1)\}| \leq 2$, and for all $e \in E(\alpha)$, $p(e) \leq 2\eps \cdot \alpha \leq 2 \eps \cdot \OPT(\ck)$. The second inequality follows from Observation~\ref{ob:LP}. Now, 
		 \begin{equation}
			\label{eq:finalProfit}
			\begin{aligned}
					p(C_{S \cap H}) ={} & p(S \cap H)+p\left(X(S \cap H)\right)  \geq  p(S)- 4\eps \cdot \OPT(\ck) \geq (1-7\eps) \OPT(K).
			\end{aligned}
		\end{equation}
		
		 The first inequality uses~\eqref{eq:finalProfitA}. The last inequality is by Lemma~\ref{lem:Solution}.   \begin{claim}
		\label{claim:Cf}
$A  = \textnormal{\textsf{EPTAS}}(\ck,\eps)$ is a solution of $\ck$. 
	\end{claim}
\begin{claimproof} If $A = \emptyset$ the claim trivially follows since $\emptyset$ is a solution of $\ck$. Otherwise, by Step~\ref{step:update} of Algorithm~\ref{alg:EPTAS}, there is a solution $F$ of $\ck$ such that $A = C_F$. By Observation~\ref{ob:convexHull}, $X(F)$ is an independent set in the matroid $M_{F}(\alpha)$. 
Also, recall that $\cm_{F} (\alpha) = \left((E,\cI) /F \right) \cap E(\alpha) $. Hence, by Definition~\ref{def:matroids}, we have that $X(F) \cup F =C_F \in \cI$. Additionally, 
 $$c\left(C_{F}\right) = c(F)+\sum_{e \in X(F)} c(e) \leq c(F)+\beta-c(F) = \beta.$$ 
 The inequality follows from~\eqref{LP}. By the above, we conclude that $A$ is a solution for $\ck$.  
\end{claimproof} 
	
By Claim~\ref{claim:Cf}, Steps~\ref{step:for}, \ref{step:update} and~\ref{step:retA}
of Algorithm~\ref{alg:EPTAS} and \eqref{eq:finalProfit}, we have that
$A = \textsf{EPTAS}(\ck,\eps)$ is a solution for $\ck$ satisfying $p(A) \geq p(C_{S \cap H}) \geq (1-7\eps) \OPT(K)$. This completes the proof.

	\end{proof}
	\begin{lemma}
	\label{thm:running}
	Given a \textnormal{BMI} instance $\ck = (E,\cI,c,p,\beta)$ and $0<\eps<\frac{1}{2}$, Algorithm~\ref{alg:EPTAS} runs in time %
	$2^{ O \left(\eps^{-2} \log \frac{1}{\eps} \right)} \cdot \textnormal{poly}(|\ck|)$. 
\end{lemma}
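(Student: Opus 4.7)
The plan is to decompose the running time of Algorithm~\ref{alg:EPTAS} into three pieces: (a) the preprocessing in Steps~\ref{step:rep}--\ref{step:APP2}, (b) the per-iteration cost of the main loop, and (c) the number of iterations of the loop. Adding (a) to the product of (b) and (c) will give the claimed bound. I would first observe that the preprocessing runs in $\textnormal{poly}(|\ck|)$: the call to $\textsf{FindRep}$ is polynomial by Lemma~\ref{lem:main}, and the $2$-approximation in Step~\ref{step:APP2} invokes a known PTAS for BMI (e.g., the PTAS of~\cite{GZ10}) with the \emph{constant} error parameter $\eps' = 1/2$, hence polynomial in $|\ck|$.

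Next I would argue that every operation inside the loop body runs in $\textnormal{poly}(|\ck|)$: checking that $F$ is a solution for $\ck$ uses one matroid membership query plus a cost summation; finding a basic optimal solution $\bar\lambda^F$ of $\textnormal{LP}(\ck,F,\alpha)$ takes polynomial time by Lemma~\ref{lem:LPrunning}; and forming $C_F$ and comparing $p(C_F)$ to $p(A)$ is immediate. So (b) contributes just a $\textnormal{poly}(|\ck|)$ factor.

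The crux is bounding the number of iterations in (c). Lemma~\ref{lem:main} gives $|R| \leq q(\eps)\cdot (\log_{1-\eps}(\eps/2)+1)$. Substituting $q(\eps) = \eps^{-\eps^{-1}} = 2^{\eps^{-1}\log(1/\eps)}$ and using $|\ln(1-\eps)| \geq \eps$ on $(0,\tfrac12)$ to deduce $\log_{1-\eps}(\eps/2) = \ln(\eps/2)/\ln(1-\eps) = O(\eps^{-1}\log(1/\eps))$, I would conclude
$$|R| \;=\; 2^{O\left(\eps^{-1}\log(1/\eps)\right)}.$$
Since the loop in Step~\ref{step:for} ranges over subsets of $R$ of size at most $\eps^{-1}$, the number of iterations is at most
$$\sum_{k=0}^{\lfloor \eps^{-1} \rfloor}\binom{|R|}{k} \;\leq\; (\eps^{-1}+1)\cdot |R|^{\eps^{-1}} \;=\; 2^{O\left(\eps^{-2}\log(1/\eps)\right)}.$$

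Multiplying (b) and (c) and adding (a) yields the claimed running time $2^{O(\eps^{-2}\log(1/\eps))}\cdot \textnormal{poly}(|\ck|)$. The main obstacle is simply the algebra in the exponent: the factor $q(\eps)$ is already $2^{\Theta(\eps^{-1}\log(1/\eps))}$, and raising $|R|$ to the $\eps^{-1}$ power is what forces the $\eps^{-2}$ in the final exponent. One must also be careful that the $\log_{1-\eps}(\eps/2)+1$ factor in $|R|$ contributes only a $\textnormal{poly}(1/\eps)$ overhead inside the exponential base, so it is absorbed into the $O(\eps^{-1}\log(1/\eps))$ bound on $\log|R|$ and does not worsen the final estimate.
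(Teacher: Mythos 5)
Your proposal is correct and follows essentially the same decomposition as the paper: polynomial preprocessing, polynomial per-iteration work via Lemma~\ref{lem:LPrunning}, and the iteration count dominated by $|R|^{\eps^{-1}}$. The only cosmetic difference is that you bound the number of profit classes by $O(\eps^{-1}\log(1/\eps))$ whereas the paper settles for the cruder $3\eps^{-2}$; since $q(\eps) = \eps^{-\eps^{-1}}$ dominates either way, both lead to $|R| = 2^{O(\eps^{-1}\log(1/\eps))}$ and the same final exponent.
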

\begin{proof}
By Lemma~\ref{lem:main}, the time complexity of Step~\ref{step:rep} 
is $\textnormal{poly}(|\ck|)$. Step~\ref{step:APP2} can be computed in time $\textnormal{poly}(|\ck|)$, by using a PTAS for BMI taking 
$\eps=\frac{1}{2}$ (see, e.g.,~\cite{GZ10}). Now, using logarithm rules we have
\begin{equation}
	\label{eq:log1}
  \log_{1-\eps} \left(\frac{\eps}{2}\right)+1  \leq 
  \frac{\ln \left(\frac{2}{\eps}\right)}{-\ln \left(1-\eps \right)}+1 \leq \frac{2\eps^{-1}}{\eps}+1 \leq 3 \eps^{-2}. 
\end{equation} 
The second inequality follows from  $x< -\ln (1-x), \forall x>-1, x \neq 0$, and $\ln y < y, \forall y>0$. Therefore, \begin{equation}
\label{eq:R}
|R| \leq \left(\log_{1-\eps} \left(\frac{\eps}{2}\right)+1\right)  \cdot q(\eps) \leq 3\eps^{-2} \cdot \eps^{-\eps^{-1}} \leq \eps^{-3\eps^{-1}}. 
\end{equation}
 The first inequality is by Lemma~\ref{lem:main}. The second inequality is by \eqref{eq:log1}. Let $$W = 	\big\{F \subseteq R~\big|~ |F| \leq \eps^{-1}, F \in \cI, c(F) \leq \beta\big\}$$ be the set of solutions considered in Step~\ref{step:for} of Algorithm~\ref{alg:EPTAS}. Then,
 \begin{equation}
\label{eq:subR}
\begin{aligned}
|W| \leq{} &  \left(|R|+1\right)^{\eps^{-1}}
	\leq  {\left(\eps^{-3\eps^{-1}}+1\right)}^{\eps^{-1}}
		 \leq {\left(\eps^{-4\eps^{-2}}\right)} = 2^{ O \left(\eps^{-2} \log \frac{1}{\eps} \right)}.
\end{aligned}
\end{equation} The second inequality is by \eqref{eq:R}. Hence, by \eqref{eq:subR}, the number of iterations of the {\bf for} loop  in Step~\ref{step:for} is bounded by $2^{ O \left(\eps^{-2} \log \frac{1}{\eps} \right)}$. In addition, by Lemma~\ref{lem:LPrunning}, the running time of each iteration is  $\textnormal{poly}(|\ck|)$. By the above, the running time of Algorithm~\ref{alg:EPTAS} is $2^{ O \left(\eps^{-2} \log \frac{1}{\eps} \right)} \cdot \textnormal{poly}(|\ck|)$.  
\end{proof}

\noindent{\bf Proof of Theorem~\ref{thm:main}:} Given a BMI instance $\ck$ and $0<\eps<\frac{1}{2}$, using Algorithm~\ref{alg:EPTAS} for $\ck$ with parameter $\frac{\eps}{7}$ we have the desired approximation guarantee by Lemma~\ref{thm:EPTAS}. Also, by Lemma~\ref{thm:running}, the running time is $2^{ O \left(\eps^{-2} \log \frac{1}{\eps} \right)} \cdot \textnormal{poly}(|\ck|)$. \qed

			\section{Correctness of  \textsf{FindRep}}
\label{sec:lemMainProof}
In this section we give the proof of Lemma~\ref{lem:main}. The proof 
is based on {\em substitution} of subsets by profit classes. 
A substitution is closely related to 
replacement (see Definition~\ref{def:Replacement}); indeed, we can construct replacements for independent sets (and therefore a representative set) by specific substitutions. 
We start by stating several lemmas that will be used in the proof of Lemma~\ref{lem:main}.
The first lemma refers to a {\em generalized exchange property} of matroids. 
\begin{lemma}
	\label{lem:gen}
	Let $\cm = (E,\cI)$ be a matroid, $A,B \in \cI$, and $a \in A \setminus B$ such that $B+a \notin \cI$. Then there is $b \in B \setminus A$ such that $A-a+b \in \cI$. 
\end{lemma}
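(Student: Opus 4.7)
The plan is to reduce the generalized exchange statement to an application of the standard exchange axiom (item~2 of the matroid definition) applied to $A-a$ and a suitably enlarged copy of $B$. The naive attempt of invoking the exchange axiom directly on the pair $A,B$ fails in the regime $|B|\leq |A|-1$, because exchange requires the source set to be strictly larger; I therefore first pad $B$ with elements of $A$ into a larger independent set whose cardinality strictly exceeds $|A-a|=|A|-1$, using Observation~\ref{ob:matroid}. The only point that needs care is to ensure that the padding avoids $a$, so that the enlarged set sits appropriately relative to $A-a$.

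Concretely, I would first apply Observation~\ref{ob:matroid} to $A,B\in\cI$ to produce $D\subseteq A\setminus B$ with $|D|=\max\{|A|-|B|,0\}$ and $B\cup D\in\cI$. The hypothesis $B+a\notin\cI$, combined with the hereditary property, rules out $a\in D$: otherwise $B+a\subseteq B\cup D$ would force $B+a\in\cI$. Hence $D\subseteq (A\setminus B)-a\subseteq A-a$. A simple case split on whether $|A|>|B|$ or $|A|\leq |B|$ then shows $|B\cup D|\geq |A|>|A-a|$ in both cases, because in the former $B$ and $D$ are disjoint of total size $|A|$, and in the latter $D=\emptyset$ while $|B|\geq |A|$.

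Next, I would apply the exchange property to the independent sets $B\cup D$ and $A-a$. This yields some $b\in (B\cup D)\setminus (A-a)$ with $(A-a)+b\in\cI$. Using $D\subseteq A-a$ from the previous step, the difference simplifies: $(B\cup D)\setminus (A-a)=B\setminus(A-a)$, and since $a\notin B$ this in turn equals $B\setminus A$. Thus $b\in B\setminus A$ and $A-a+b\in\cI$, which is exactly the conclusion of the lemma.

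The only conceptual obstacle is recognizing Observation~\ref{ob:matroid} as the right enlargement device in the small-$B$ regime, together with the small but crucial observation that $a\notin D$ (guaranteed precisely by the hypothesis $B+a\notin\cI$). Once the enlarged independent set $B\cup D$ is in place and $a$ is excluded from it, the generalized exchange is essentially one line from the standard axiom.
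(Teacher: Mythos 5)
Your proof is correct and follows essentially the same route as the paper: pad $B$ via Observation~\ref{ob:matroid} to an independent set of size $\geq |A|$, then apply the exchange axiom against $A-a$. The only (minor, stylistic) difference is that you rule out $a\in D$ \emph{before} invoking exchange, so that $(B\cup D)\setminus(A-a)$ collapses directly to $B\setminus A$, whereas the paper applies exchange first and then argues $b\neq a$ by contradiction.
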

\begin{proof}
	By Observation~\ref{ob:matroid}, there is $C \subseteq A \setminus B, |C| = \max\{|A|-|B|, 0\}$ such that $B \cup C \in \cI$. Then, 
	\begin{equation}
		\label{eq:>}
		|B \cup C| \geq |B|+|A|-|B| > |A|-1 = |A-a|.
	\end{equation}
	
 The first inequality holds since $|C| = \max\{|A|-|B|, 0\}$ and $C \cap B = \emptyset$. 
 Recall that $B \cup C \in \cI$; also, by  the hereditary property of $\cm$, $A-a \in \cI$. Therefore, by \eqref{eq:>} and the exchange property of $\cm$, there is $b \in (B \cup C) \setminus (A-a)$ such that $A-a+b \in \cI$. As $C \subseteq A$, it follows that $(B \cup C) \setminus (A-a) \subseteq (B \setminus A) +a$, thus $b \in (B \setminus A) +a$. Now, assume towards contradiction that $b = a$; then, $a \in C$ since $b \in (B \cup C) \setminus (A-a)$ and $a \in A \setminus B$. As $B \cup C \in \cI$, by the hereditary property of $\cm$ it follows that $B+a \in \cI$. Contradiction. Hence, $b \neq a$ and it follows that $b \in B \setminus A$ and $A-a+b \in \cI$.  
\end{proof}

The next lemma gives a general property of minimum bases of matroids.
\begin{lemma}
\label{lem:notIS}
Given a matroid $\cm = (E,\cI)$ and a weight function $w:E \rightarrow \mathbb{R}_{\geq 0}$, let $B$ be a minimum basis of $\cm$ w.r.t. $w$. Then, for any $a \in E \setminus B$ it holds that $\{e\in B ~|~ w(e)\leq w(a)\}+ a \notin \cI$. 	 
\end{lemma}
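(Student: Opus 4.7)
\textbf{Proof plan for Lemma~\ref{lem:notIS}.}
The plan is to argue by contradiction: assume $B' + a \in \cI$, where $B' = \{e \in B \mid w(e) \leq w(a)\}$, and exhibit a basis of strictly smaller weight than $B$, contradicting the minimality of $B$. Throughout, I rely only on the matroid axioms (in particular the exchange property and Observation~\ref{ob:matroid}) together with the fact that all bases of $\cm$ have the same cardinality.

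First I would observe that $B' \subsetneq B$: if $B' = B$ then $B' + a = B + a$, and since $B$ is a basis and $a \notin B$, $B + a \notin \cI$, immediately contradicting $B' + a \in \cI$. Hence $B \setminus B' \neq \emptyset$, and by definition of $B'$ every element $b \in B \setminus B'$ satisfies $w(b) > w(a)$. Next, since $B' + a \in \cI$ and every basis has size $|B|$, I have $|B' + a| \leq |B|$. Now apply Observation~\ref{ob:matroid} with the independent sets $B$ and $B' + a$ to obtain a set $D \subseteq B \setminus (B' + a) = B \setminus B'$ with $|D| = |B| - |B' + a|$ such that $B'' := (B' + a) \cup D \in \cI$. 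Since $|B''| = |B|$ and $B''$ is independent, $B''$ is itself a basis of $\cm$.

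For the weight comparison, note that $|D| = |B| - |B'| - 1 = |B \setminus B'| - 1$, so there is exactly one element $b \in (B \setminus B') \setminus D$, and
\[
w(B'') = w(B') + w(a) + w(D) = \bigl(w(B') + w(b) + w(D)\bigr) - w(b) + w(a) = w(B) - w(b) + w(a).
\]
Since $b \in B \setminus B'$ gives $w(b) > w(a)$, we conclude $w(B'') < w(B)$, contradicting the minimality of $B$. Therefore $B' + a \notin \cI$, which is exactly the claim.

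The only mildly subtle step is verifying $|B' + a| \leq |B|$ and that the element $b$ actually exists; both reduce to the fact $B' \neq B$ established at the outset, so there is no genuine obstacle. The argument is a clean matroid-theoretic exchange: swap the heavy element $b$ of $B$ for the lighter candidate $a$ (using $D$ to fill out the basis), which is precisely the standard optimality certificate for minimum-weight bases.
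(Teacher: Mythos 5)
Your proof is correct and follows essentially the same route as the paper's: assume $B' + a \in \cI$ towards a contradiction, extend it to a basis using Observation~\ref{ob:matroid} with fill elements drawn from $B \setminus B'$, and observe that the resulting basis differs from $B$ by swapping one heavy element $b$ for the lighter $a$, contradicting minimality. The only cosmetic difference is that you first establish $B' \subsetneq B$ (so that $|B'+a|\leq |B|$ and the single element $b$ exists), whereas the paper leaves this implicit by working with $\max\{|B|-|D+a|,0\}$ and then deducing $|F|\geq |B|$; both are fine.
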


\begin{proof}
	Let $D = \{e\in B ~|~ w(e)\leq w(a)\}$. Assume towards contradiction that $D + a \in \cI$. Then, by Observation~\ref{ob:matroid} there is $C \in B \setminus (D+a), |C| = \max \{|B|-|D+a|,0\}$ such that $ (D+a)\cup C \in \cI$; let $F =  (D+a)\cup C$. Note that $|F| = |D+a|+|C| \geq |D+a|+|B|-|D+a| =  |B|$. Hence, as $|F| \geq |B|$ and $B$ is a basis of $\cm$, we have that $F$ is a basis of $\cm$ (since all bases have the same cardinality; see, e.g., (39.2) in~\cite{schrijver2003combinatorial}). Noting that  $a \notin B$, $F -a \subseteq B$ and $|F| = |B|$, there is $e \in B \setminus F$ such that $F = B+a-e$. As $e \in B \setminus F$ and  $B \setminus F \subseteq B \setminus D$  it holds that $e \in B \setminus D$, and it follows that $w(a)<w(e)$. Therefore,
	  \begin{equation}
		\label{eq:contradictionD}
		w(F) = w((D+a) \cup C) =  w(B)+w(a)-w(e) < w(B)
	\end{equation} 
	By~\eqref{eq:contradictionD}, we have that $F$ is a basis of $\cm$ satisfying
	$w(F)<w(B)$. Contradiction (to the minimality of $B$ w.r.t. $w$).
\end{proof}
In the next lemma we consider a simple property of bases of union matroids. 
\begin{lemma}
	\label{lem:minBasisGeneral}
	 Let $M_1 = (E_1, \cI_1), \ldots, M_k=(E_k, \cI_k)$ be matroids such that $E_i \cap E_j  = \emptyset ~\forall i,j \in [k], i \neq j$; also, let $w:E \rightarrow \mathbb{R}_{\geq 0}$ be a weight function, and $R$ a minimum basis of $\bigvee_{i \in [k]} M_i$ w.r.t.~$w$. Then, for all $i \in [k]$, $R \cap E_i$ is a  minimum basis of $M_i$ w.r.t.~$w$.\footnote{With a slight abuse of notation we use $w$ also for the restriction of $w$ on $E_i$.}
\end{lemma}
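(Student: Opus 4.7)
The plan is to establish the lemma in two stages: first show that $R \cap E_i$ is a basis of $M_i$ for each $i$, then show that it has minimum weight among such bases. The disjointness hypothesis is crucial throughout, since it implies that every $A \in \cI_{\vee,[k]}$ decomposes uniquely as $A = \bigcup_{j \in [k]} (A \cap E_j)$ with $A \cap E_j \in \cI_j$ for all $j$ (the decomposition $A = \bigcup_j F_j$ given by the definition of matroid union must have $F_j = A \cap E_j$, since the $E_j$ are pairwise disjoint). Applied to $R$, this yields $R \cap E_i \in \cI_i$ for every $i$.

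To see that $R \cap E_i$ is a basis of $M_i$, I would argue by contradiction: if some $e \in E_i \setminus R$ satisfied $(R \cap E_i) + e \in \cI_i$, then
\[
R + e \;=\; \bigl((R \cap E_i) + e\bigr) \,\cup\, \bigcup_{j \neq i}(R \cap E_j)
\]
would lie in $\cI_{\vee,[k]}$, since each piece is independent in the corresponding $M_j$ and the $E_j$ are disjoint. This contradicts the fact that $R$ is a basis of $\bigvee_{i \in [k]} M_i$.

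For the minimality step, suppose toward contradiction that some basis $B_i$ of $M_i$ satisfies $w(B_i) < w(R \cap E_i)$, and set $R' = B_i \cup \bigcup_{j \neq i}(R \cap E_j)$. By construction $R' \in \cI_{\vee,[k]}$. Moreover, since all bases of a matroid have the same cardinality (as in the proof of Lemma~\ref{lem:notIS}), $|B_i| = |R \cap E_i|$, and disjointness of the $E_j$ gives $|R'| = |R|$; hence $R'$ is itself a basis of $\bigvee_{i \in [k]} M_i$ (any maximum-size independent set is a basis). Its weight satisfies $w(R') = w(R) - w(R \cap E_i) + w(B_i) < w(R)$, contradicting the minimality of $R$.

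No single step here is genuinely hard; the only place demanding care is the verification that $R'$ is a \emph{basis}, not merely an independent set, of $\bigvee_{i \in [k]} M_i$. This is the main obstacle in the sense that forgetting to invoke equicardinality of bases would leave a gap in the contradiction. Once that standard matroid fact is used, the rest of the argument is purely bookkeeping enabled by disjointness of the ground sets.
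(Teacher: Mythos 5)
Your proof is correct and follows essentially the same route as the paper's: show $R \cap E_i$ is a basis of $M_i$ by contradiction using the union structure, then swap in a cheaper basis of $M_i$ to contradict minimality. The only difference is presentational — the paper asserts directly that $(R \setminus E_i) \cup B$ is a basis of the union matroid by appealing to the definition, while you justify this step more carefully via equicardinality of bases; both are fine.
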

\begin{proof}
	Let $i \in [k]$.  For all $e \in E_i \setminus R$, it holds that $R \cap E_i+e$ is not an independent set of $M_i$ since otherwise $R+e$ is an independent set of $\bigvee_{i \in [k]} M_i$ by Definition~\ref{def:matroids},  contradicting that $R$ is a basis of $\bigvee_{i \in [k]} M_i$; we conclude that $R \cap E_i$ is a basis of  $M_i$. Assume towards contradiction that there is a basis $B$ of the matroid $M_i$ such that $w(B)< w(R \cap E_i)$. As $E_i \cap E_j  = \emptyset ~\forall i,j \in [k], i \neq j$, by Definition~\ref{def:matroids} it follows that $\left(R \setminus E_i\right) \cup B$ is a basis of the matroid $\bigvee_{i \in [k]} M_i$. In addition, $$w(\left(R \setminus E_i\right) \cup B) = w(R)-w(R \cap E_i)+w(B) < w(R).$$ We reach a contradiction since $R$ is a minimum basis w.r.t. $w$ of $\bigvee_{i \in [k]} M_i$. 
\end{proof}

We now prove Lemma~\ref{lem:main} using several auxiliary lemmas. For the remainder of this section, let $\ck = (E,\cI,c,p,\beta)$ be a BMI instance, $0<\eps<\frac{1}{2}$, $R = \textsf{FindRep}(\ck,\eps)$, and $\alpha$ the value from Step~\ref{step:alpha} of Algorithm~\ref{alg:representative}. For the next lemma, recall the sets $C_{r}(\alpha)$ for  $r \in [\log_{1-\eps} \left(\frac{\eps}{2}\right)+1]$ were  defined in \eqref{Er}. 
\begin{lemma}
	\label{lem:minBasis}
	For all $r \in [\log_{1-\eps} \left(\frac{\eps}{2}\right)+1]$ it holds that $R \cap C_r(\alpha)$ is a minimum basis w.r.t. $c$ of the matroid $\cm =  \left[ (E, \cI) \cap C_{r}(\alpha) \right]_{\leq q(\eps)}$, where $R = \textsf{FindRep}(\ck,\eps)$. 
\end{lemma}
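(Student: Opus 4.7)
My plan is to derive Lemma~\ref{lem:minBasis} as a direct application of Lemma~\ref{lem:minBasisGeneral}. Recall that in Step~\ref{step:Br} of Algorithm~\ref{alg:representative}, the set $R = \textsf{FindRep}(\ck,\eps)$ is defined to be a minimum basis w.r.t.~$c$ of the union matroid $\bigvee_{r} M_r$, where for each $r \in [\log_{1-\eps}(\eps/2)+1]$ we set $M_r = [(E,\cI)\cap C_r(\alpha)]_{\leq q(\eps)}$. By Definition~\ref{def:matroids}, the ground set of $M_r$ is exactly $C_r(\alpha)$, so the entire lemma will reduce to Lemma~\ref{lem:minBasisGeneral} once we verify that its hypothesis (disjointness of the ground sets) is satisfied.

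The disjointness is immediate from the definition of the profit classes in~\eqref{Er}: each element $e$ has a single value $p(e)/(2\alpha)$, which can lie in at most one of the half-open intervals $((1-\eps)^r, (1-\eps)^{r-1}]$, since these intervals are pairwise disjoint. Consequently, $C_r(\alpha) \cap C_{r'}(\alpha) = \emptyset$ whenever $r \neq r'$, so the matroids $M_r$ have pairwise disjoint ground sets.

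With disjointness in hand, I would simply invoke Lemma~\ref{lem:minBasisGeneral} with $k = \log_{1-\eps}(\eps/2)+1$, the matroids $\{M_r\}_r$, and the weight function $w = c$. The conclusion of that lemma directly yields that $R \cap C_r(\alpha)$ is a minimum basis of $M_r = [(E,\cI)\cap C_r(\alpha)]_{\leq q(\eps)}$ w.r.t.~$c$ for every relevant $r$, which is exactly the statement of Lemma~\ref{lem:minBasis}. I do not anticipate any real obstacle here: the argument is a pure reduction, and the only substantive step is the one-line verification of disjointness of the profit classes.
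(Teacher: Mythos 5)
Your proposal is correct and matches the paper's proof essentially verbatim: the paper also observes that the profit classes $C_r(\alpha)$ are pairwise disjoint by \eqref{Er}, and then invokes Step~\ref{step:Br} of Algorithm~\ref{alg:representative} together with Lemma~\ref{lem:minBasisGeneral} to conclude. No gaps.
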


\begin{proof}
	Observe that for all $r,t \in [\log_{1-\eps} \left(\frac{\eps}{2}\right)+1]$ it holds that $C_r(\alpha) \cap C_t(\alpha) = \emptyset$ by \eqref{Er}. Hence, the statement of the lemma follows from Step~\ref{step:Br} of Algorithm~\ref{alg:representative} and Lemma~\ref{lem:minBasisGeneral}. 
\end{proof}
\begin{lemma}
	\label{claim:RR}
	For all $e \in H$ there is exactly one $r \in [\log_{1-\eps} \left(\frac{\eps}{2}\right)+1]$ such that $e \in C_r(\alpha)$.
\end{lemma}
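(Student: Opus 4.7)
The plan is to split the claim into uniqueness (at most one $r$ works) and existence (some $r$ works, and it falls into the prescribed index range $[\log_{1-\eps}(\eps/2)+1]$). Uniqueness is immediate from~\eqref{Er}: for a fixed $\alpha$, the half-open intervals $\big((1-\eps)^r, (1-\eps)^{r-1}\big]$ are pairwise disjoint as $r$ ranges over the positive integers (indeed they partition $(0,1]$ since $1-\eps\in(0,1)$), so the ratio $p(e)/(2\alpha)$ can lie in at most one of them, meaning $e$ belongs to at most one profit class $C_r(\alpha)$.

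For existence, the approach is to sandwich $p(e)/(2\alpha)$ strictly inside $(\eps/2, 1]$, which automatically produces a valid $r\geq 1$ via the partition above. The upper bound uses the standing assumption $\{e\}\in\cI$ made at the start of Section~\ref{sec:alg} together with $c(e)\leq\beta$ (built into the definition of the cost function), so $\{e\}$ is a solution of $\ck$ and $p(e)\leq\OPT(\ck)$; combined with $\alpha\geq\OPT(\ck)/2$, which comes from Step~\ref{step:APP} of $\textsf{FindRep}$ using a $\tfrac{1}{2}$-approximation, this gives $p(e)/(2\alpha)\leq 1$. The lower bound exploits $e\in H$, i.e.\ $p(e) > \eps\cdot\OPT(\ck)$; since $\alpha = p(S^*) \leq \OPT(\ck)$, we get $p(e)/(2\alpha) > \eps/2$. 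Thus the partition property of the intervals yields a (unique) positive integer $r$ with $p(e)/(2\alpha) \in \big((1-\eps)^r, (1-\eps)^{r-1}\big]$, so $e \in C_r(\alpha)$ by~\eqref{Er}.

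It remains to verify that this $r$ lies in $[\log_{1-\eps}(\eps/2)+1]$. From $(1-\eps)^{r-1}\geq p(e)/(2\alpha)>\eps/2$, applying the strictly decreasing function $\log_{1-\eps}$ flips the inequality, giving $r-1 < \log_{1-\eps}(\eps/2)$, i.e., $r < \log_{1-\eps}(\eps/2)+1$. Since $r$ is a positive integer and the paper's notation convention is $[k]=\{1,\ldots,\lfloor k\rfloor\}$, this strict inequality yields $r\in [\log_{1-\eps}(\eps/2)+1]$, as required. The only mildly delicate point in the whole argument is keeping the chain of inequalities strict at the right step, so that the monotonicity of $\log_{1-\eps}$ produces a strict inequality $r-1<\log_{1-\eps}(\eps/2)$ rather than a non-strict one; this is guaranteed by the strict definition of $H$, and no further case analysis is needed.
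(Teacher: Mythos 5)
Your proof is correct and follows essentially the same route as the paper: sandwich $p(e)/(2\alpha)$ strictly inside $(\eps/2, 1]$ using $e\in H$ and the two-sided bound $\OPT(\ck)/2\leq\alpha\leq\OPT(\ck)$, then use the disjoint-interval structure of~\eqref{Er} to get a unique $r$ and bound it. The only cosmetic difference is that you derive the bound $r<\log_{1-\eps}(\eps/2)+1$ directly from the strict inequality and monotonicity of $\log_{1-\eps}$, whereas the paper checks that $r_0=\lceil\log_{1-\eps}(\eps/2)\rceil$ lies in the range and that $(1-\eps)^{r_0}\leq\eps/2$; both are valid, and you are somewhat more explicit that the strictness coming from the definition of $H$ is what makes the endpoint work out.
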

\begin{proof}
	Let $e \in H$. Observe that: \begin{equation}
		\label{eq:app}
		\frac{\eps}{2} \leq \frac{p(e)}{2 \cdot \OPT(\ck)} \leq \frac{p(e)}{2 \cdot \alpha} \leq \frac{p(e)}{\OPT(\ck)} \leq 1.
	\end{equation}The first inequality holds since $e \in H$. The second and third inequalities are because $\alpha$ is the value of a $2$-approximation for the optimum of $\ck$; thus, $\frac{\OPT(\ck)}{2} \leq \alpha \leq \OPT(\ck)$. The last inequality is because $\{e\}$ is a solution of $\ck$. In addition, observe that for $r_0 = \ceil{\log_{1-\eps} \frac{\eps}{2}}$ it holds that $(1-\eps)^{r_0} \leq \frac{\eps}{2}$ and that for $r_1 = 1$ it holds that $(1-\eps)^{r_1-1} = 1$. Hence, because $r_0 \leq \floor{\log_{1-\eps} \left(\frac{\eps}{2}\right)+1}$, by \eqref{eq:app}, there is exactly one $r \in [\log_{1-\eps} \left(\frac{\eps}{2}\right)+1]$ such that $\frac{p(e)}{2\alpha} \in \big((1-\eps)^r, (1-\eps)^{r-1}\big]$; thus, by \eqref{Er} it holds that $e \in C_r(\alpha)$ and $e \notin C_{r'}(\alpha)$ for $r' \in [\log_{1-\eps} \left(\frac{\eps}{2}\right)+1] \setminus \{r\}$.  
\end{proof}

For the proof of Lemma~\ref{lem:main}, we define a {\em substitution} of some independent set. The first two properties of substitution of some $G \in \cI_{\leq q(\eps)}$ are identical to the first two properties in the definition of a replacement of $G$. 
However, we require that a substitution preserves the number of profitable elements in $G$ from each profit class, and that a substitution must be disjoint to the set of non-profitable elements in $G$. %

\begin{definition}
	\label{def:sub}
	For $G \in \cI_{\leq q(\eps)}$ and $Z_G \subseteq \bigcup_{r \in [\log_{1-\eps} \left(\frac{\eps}{2}\right)+1] } C_r(\alpha)$, we say that $Z_G$ is a {\em substitution} of $G$ if the following holds. \begin{enumerate}

		\item $(G \setminus H) \cup Z_G \in \cI_{\leq q(\eps)}$.\label{pp:1}
		
	\item  $c(Z_G) \leq c(G \cap H)$.\label{pp:2}
	
	\item For all $r \in [\log_{1-\eps} \left(\frac{\eps}{2}\right)+1]$ it holds that $|C_r(\alpha) \cap Z_G| = |C_r(\alpha) \cap G \cap H|$.\label{pp:3}

	\item $(G \setminus H) \cap Z_G = \emptyset$.\label{pp:4}
	\end{enumerate}
\end{definition}

\begin{lemma}
	\label{lem:R1}
	For all $G \in \cI_{\leq q(\eps)}$ there is a substitution $Z_G$ of $G$ such that $Z_G \subseteq R$.

\end{lemma}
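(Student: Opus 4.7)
The plan is to prove Lemma~\ref{lem:R1} by iterative exchange, starting from the obvious substitution $Z_G:=G\cap H$ and swapping elements of $Z_G\setminus R$ for elements of $R$ until $Z_G\subseteq R$. The initial $Z_G=G\cap H$ is indeed a substitution: property~\ref{pp:1} holds since $(G\setminus H)\cup Z_G=G\in \cI_{\leq q(\eps)}$, properties~\ref{pp:2}--\ref{pp:4} are immediate, and $Z_G\subseteq H\subseteq\bigcup_r C_r(\alpha)$ by Lemma~\ref{claim:RR}.

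The core step is the swap. Given a current substitution $Z_G$ and some $a\in Z_G\setminus R$, let $r$ be the unique index satisfying $a\in C_r(\alpha)$, and set $A:=(G\setminus H)\cup Z_G\in\cI_{\leq q(\eps)}$. I will produce $b\in(R\cap C_r(\alpha))\setminus A$ with $c(b)\leq c(a)$ and $A-a+b\in\cI_{\leq q(\eps)}$, after which the swap $Z_G\mapsto Z_G-a+b$ preserves all four substitution properties (property~\ref{pp:1} by the construction of $b$, property~\ref{pp:2} from $c(b)\leq c(a)$, property~\ref{pp:3} because both $a$ and $b$ lie in $C_r(\alpha)$, and property~\ref{pp:4} because $b\notin A$ forces $b\notin G\setminus H$) while strictly increasing $|Z_G\cap R|$. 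Since $|Z_G|=|G\cap H|$ is preserved throughout, the process halts after at most $|G\cap H|$ iterations with $Z_G\subseteq R$, completing the proof.

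To produce $b$ I use the minimum-basis machinery. By Lemma~\ref{lem:minBasis}, $R\cap C_r(\alpha)$ is a minimum basis of $\cm_r:=[(E,\cI)\cap C_r(\alpha)]_{\leq q(\eps)}$ with respect to $c$, so by Lemma~\ref{lem:notIS} the set $D:=\{e\in R\cap C_r(\alpha):c(e)\leq c(a)\}$ fails to be independent in $\cm_r$ upon adding $a$. The main obstacle is that this non-independence in the truncated matroid $\cm_r$ can stem either from a genuine dependency in $(E,\cI)$ or purely from the size bound $q(\eps)$, so I split into two cases. If $D+a\notin\cI$, I apply Lemma~\ref{lem:gen} to the matroid $(E,\cI)$ with the independent sets $A$ and $D$ and the element $a\in A\setminus D$, obtaining $b\in D\setminus A$ with $A-a+b\in\cI$; since $|A-a+b|=|A|\leq q(\eps)$, this yields $A-a+b\in\cI_{\leq q(\eps)}$. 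Otherwise $|D+a|>q(\eps)$, which together with $|R\cap C_r(\alpha)|\leq q(\eps)$ forces $D=R\cap C_r(\alpha)$ and $|R\cap C_r(\alpha)|=q(\eps)$; since $|A-a|\leq q(\eps)-1<|R\cap C_r(\alpha)|$, the standard exchange property of $(E,\cI)$ applied to the independent sets $A-a$ and $R\cap C_r(\alpha)$ produces $b\in (R\cap C_r(\alpha))\setminus(A-a)$ with $(A-a)+b\in\cI$; the hypothesis $a\notin R$ guarantees $b\neq a$, hence $b\notin A$, and $c(b)\leq c(a)$ because $b\in D=R\cap C_r(\alpha)$.
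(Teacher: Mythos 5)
Your proof is correct and takes essentially the same approach as the paper: the paper also starts implicitly from $G\cap H$, formalized as taking a substitution $Z_G$ that maximizes $|Z_G\cap R|$ and deriving a contradiction from the existence of $a\in Z_G\setminus R$, and it performs exactly your exchange using Lemmas~\ref{lem:minBasis},~\ref{lem:notIS} and~\ref{lem:gen}. The one place you work harder than necessary is the case split on whether $D+a\notin\cI$ or $|D+a|>q(\eps)$: the paper avoids this by applying Lemma~\ref{lem:gen} once, to the truncated matroid $(E,\cI_{\leq q(\eps)})$ (itself a matroid by Lemma~\ref{lem:prel}) with $A=\Delta_G$ and $B=D$; since Lemma~\ref{lem:notIS} gives $D+a\notin\cI_{\leq q(\eps)}$, this single application already produces $b\in D\setminus\Delta_G$ with $\Delta_G-a+b\in\cI_{\leq q(\eps)}$, covering both of your cases simultaneously. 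Your two-case workaround is valid, just slightly longer than the paper's route.
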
%
In the proof of Lemma~\ref{lem:R1}, we assume towards a contradiction that there is no substitution for some $G \in \cI_{\leq q(\eps)}$ which is a subset of $R$. To reach a contradiction, we take a substitution $Z_G$ of $G$ with maximal number of elements from $R$ and show that using Lemma~\ref{lem:gen} and Lemma~\ref{lem:notIS} we can create a substitution with more elements from $R$ by replacing an element from $Z_G \setminus R$ by an element from $R \setminus Z_G$.  
\begin{proof}[Proof of Lemma~\ref{lem:R1}]
Let $G \in \cI_{\leq q(\eps)}$ and let $Z_G$ be a substitution of $G$ such that $|Z_G \cap R|$ is maximal among all substitutions of $G$; formally, let $\mathcal{S}(G)$ be all substitutions of $G$ and let $Z_G \in \{ Z \in \mathcal{S}(G)~|~ |Z \cap R| = \max_{Z' \in \mathcal{S}(G)} |Z' \cap R|\}$. Since $G \cap H$ is in particular a substitution of $G$  it follows that $\mathcal{S}(G)\neq \emptyset$, and thus  $Z_G$ is well defined.  Assume towards a contradiction that there is $a \in Z_G \setminus R$; then, by Definition~\ref{def:sub} there is $r \in [\log_{1-\eps} \left(\frac{\eps}{2}\right)+1]$ such that $a \in C_r(\alpha)$. Let $\Delta_G = (G \setminus H) \cup Z_G$. 
\begin{claim}
	\label{claim:auxRep}
	There is $b \in (C_r(\alpha) \cap R) \setminus \Delta_G$ such that $c(b) \leq c(a)$ and $\Delta_G -a+b \in \cI_{\leq q(\eps)}$. 
\end{claim}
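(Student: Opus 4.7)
The plan is to apply the generalized exchange property, Lemma~\ref{lem:gen}, in the truncated matroid $\mathcal{N} = [(E,\cI)]_{\leq q(\eps)}$, using the set $D = \{e \in R \cap C_r(\alpha) : c(e) \leq c(a)\}$ as the ``other'' independent set to swap with $\Delta_G$. The set $D$ is chosen precisely so that any element swapped out of it will satisfy the required cost bound $c(b) \leq c(a)$.

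First I would collect the inputs to Lemma~\ref{lem:gen}. By Lemma~\ref{lem:minBasis}, $B := R \cap C_r(\alpha)$ is a minimum basis of $\mathcal{M}^r = [(E,\cI) \cap C_r(\alpha)]_{\leq q(\eps)}$ with respect to $c$, so by Lemma~\ref{lem:notIS} we have $D + a \notin \cI^r$ (where $\cI^r$ is the independent set collection of $\mathcal{M}^r$). Since $\Delta_G \in \cI_{\leq q(\eps)}$ (by Property~\ref{pp:1} of Definition~\ref{def:sub}) and $D \subseteq B$ with $B \in \cI^r \subseteq \cI_{\leq q(\eps)}$, both $\Delta_G$ and $D$ belong to $\mathcal{N}$, which is a matroid by Lemma~\ref{lem:prel}. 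The element $a$ lies in $\Delta_G$ (since $a \in Z_G \subseteq \Delta_G$) but not in $D$ (since $a \notin R \supseteq D$), so $a \in \Delta_G \setminus D$.

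The key verification step is that $D + a \notin \mathcal{N}$. Because $D \subseteq B \subseteq C_r(\alpha)$ and $a \in C_r(\alpha)$, the set $D + a$ lies in $C_r(\alpha)$; combined with $D + a \notin \cI^r$, this forces either $D + a \notin \cI$ or $|D + a| > q(\eps)$, and in either case $D + a \notin \mathcal{N}$. I can now apply Lemma~\ref{lem:gen} in the matroid $\mathcal{N}$ to the pair $(\Delta_G, D)$ with the element $a$, yielding some $b \in D \setminus \Delta_G$ such that $\Delta_G - a + b \in \mathcal{N} = \cI_{\leq q(\eps)}$.

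It remains to check the three conditions of the claim: $b \in D \subseteq R \cap C_r(\alpha)$ gives $b \in C_r(\alpha) \cap R$; $b \in D$ gives $c(b) \leq c(a)$ by construction of $D$; and $b \notin \Delta_G$ is precisely the statement $b \in (C_r(\alpha) \cap R) \setminus \Delta_G$ together with $\Delta_G - a + b \in \cI_{\leq q(\eps)}$. The only mildly delicate point is recognizing that Lemma~\ref{lem:notIS} and Lemma~\ref{lem:gen} must be applied in different matroids ($\mathcal{M}^r$ and $\mathcal{N}$, respectively): the first exploits the minimum-basis property inside the profit class, while the second transfers the resulting non-independence to the ambient truncated matroid, where $\Delta_G$ naturally lives. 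Everything else is a direct verification.
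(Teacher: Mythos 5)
Your proof is correct and follows essentially the same route as the paper: define $D$ as the low-cost elements of the minimum basis $R \cap C_r(\alpha)$ in the profit-class matroid, use Lemma~\ref{lem:notIS} to get $D+a \notin \cI'$, transfer this to $D+a \notin \cI_{\leq q(\eps)}$ via $D+a \subseteq C_r(\alpha)$, and then apply Lemma~\ref{lem:gen} in the truncated ambient matroid to obtain $b$. The only difference is that you spell out more explicitly the preconditions of Lemma~\ref{lem:gen} (that $D, \Delta_G$ are both independent in the truncation and $a \in \Delta_G \setminus D$), which the paper leaves implicit.
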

\begin{claimproof}
Let $(C_{r}(\alpha),\cI') =  \left[ (E, \cI) \cap C_{r}(\alpha) \right]_{\leq q(\eps)}$ and $\cm = (C_{r}(\alpha),\cI')$. By Lemma~\ref{lem:minBasis}  it holds that $R \cap C_r(\alpha)$ is a minimum basis w.r.t. $c$ of $\cm$.  Define $D = \{e \in C_r(\alpha) \cap R~|~c(e) \leq c(a)\}$. Then, since $a \in C_r(\alpha) \setminus R$, it holds that $D+a \notin \cI'$ by Lemma~\ref{lem:notIS}. In addition, by Definition~\ref{def:matroids} it holds that $\cI' = \{A \subseteq C_r(\alpha)~|~A \in \cI_{\leq q(\eps)}\}$. Hence, $D+a \notin \cI_{\leq q(\eps)}$. Therefore, by Lemma~\ref{lem:gen} there is $b \in D \setminus \Delta_G$ such that $\Delta_G -a+b \in \cI_{\leq q(\eps)}$. The claim follows since $c(b) \leq c(a)$ because $b \in D$.

\end{claimproof} \newline Using Claim~\ref{claim:auxRep}, let $b \in C_r(\alpha) \cap R \setminus \Delta_G$  such that $c(b) \leq c(a)$ and $\Delta_G -a+b \in \cI_{\leq q(\eps)}$. Then,  the properties of Definition~\ref{def:sub} are satisfied for $Z_G-a+b$ by the following. \begin{enumerate}
\item $(G\setminus H) \cup (Z_G -a +b) =  \Delta_G-a+b \in \cI_{\leq q(\eps)}$ by the definition of $b$. 

\item  $c(Z_G-a+b) \leq c(Z_G) \leq c(G\cap H)$ because $c(b) \leq c(a)$. 

\item for all $r' \in [\log_{1-\eps} \left(\frac{\eps}{2}\right)+1]$ it holds that $|C_{r'}(\alpha) \cap (Z_G-a+b)| = |C_{r'}(\alpha) \cap Z_G| = |C_{r'}(\alpha) \cap G \cap H|$ because $a,b \in C_r(\alpha)$.

\item $|(G \setminus H) \cap (Z_G -a+b)| \leq |(G \setminus H) \cap (Z_G)| = 0$ where the inequality follows because $b \notin \Delta_G$ and the equality is since $Z_G$ is a substitution of $G$. 
\end{enumerate}

 By the above and Definition~\ref{def:sub}, it follows that $Z_G+a-b$ is a substitution of $G$; that is, $Z_G+a-b \in \mathcal{S}(G)$. Moreover, \begin{equation}
\label{eq:ZG}
|R \cap (Z_G -a+b)|>|R \cap Z_G| = \max_{Z \in \mathcal{S}(G)} |Z \cap R|.
\end{equation} The first inequality is because $a \in Z_G \setminus R$ and $b \in R$. By \eqref{eq:ZG} we reach a contradiction since we found a replacement of $G$ with more elements in $R$ than $Z_G \in \mathcal{S}(G)$, which is defined as a replacement of $G$ with maximal number of elements in $R$. Therefore, $Z_G \subseteq R$ as required. 
\end{proof}
We are now ready to prove Lemma~\ref{lem:main}. The proof follows by showing that for any $G \in \cI_{\leq q(\eps)}$ a substitution of $G$ which is a subset of $R$ is in fact a replacement of $G$. 

	\noindent{\bf Proof of Lemma~\ref{lem:main}}: Let $G \in \cI_{\leq q(\eps)}$. By Lemma~\ref{lem:R1}, $G$ has a substitution $Z_G \subseteq R$. Then, \begin{equation}
	\label{eq:profitR}
	\begin{aligned}
		p(Z_G) \geq{} & \sum_{r \in [\log_{1-\eps} \left(\frac{\eps}{2}\right)+1]} p(C_r(\alpha) \cap Z_G) 
		\\ \geq{} &  \sum_{r \in [\log_{1-\eps} \left(\frac{\eps}{2}\right)+1] \text{ s.t. } C_r(\alpha) \neq \emptyset} |C_r(\alpha) \cap Z_G| \cdot \min_{e \in C_r(\alpha)} p(e) 
		\\ \geq{} & \sum_{r \in [\log_{1-\eps} \left(\frac{\eps}{2}\right)+1] \text{ s.t. } C_r(\alpha) \neq \emptyset} |C_r(\alpha) \cap G \cap H| \cdot (1-\eps) \cdot \max_{e \in C_r(\alpha)} p(e) 
		\\ \geq{} & (1-\eps) \cdot p(G \cap H).
	\end{aligned}
\end{equation}  The third inequality is by \eqref{Er} and Property~\ref{pp:3} of Definition~\ref{def:sub}. The last inequality follows from Lemma~\ref{claim:RR}. Therefore,  
\begin{equation}
	\label{eq:profitFINAL}
	\begin{aligned}
	p((G \setminus H) \cup Z_G) & = p(G \setminus H)+p(Z_G) \\
	&  \geq p(G \setminus H)+(1-\eps) \cdot p(G \cap H) \\
	& \geq (1-\eps) \cdot p(G).
	\end{aligned}
\end{equation}
The first equality follows by Property~\ref{pp:4} of Definition~\ref{def:sub}. The first inequality is by \eqref{eq:profitR}. Now,  $Z_G$ satisfies Property~\ref{p:I} and~\ref{p:s} of Definition~\ref{def:Replacement} by Properties~\ref{pp:1},~\ref{pp:2} of Definition~\ref{def:sub}, respectively. In addition, $Z_G$ satisfies Property~\ref{p:p} of Definition~\ref{def:Replacement} by \eqref{eq:profitFINAL}. Finally, $Z_G$ satisfies Property~\ref{p:car}  of Definition~\ref{def:Replacement} by $$|Z_G| = \sum_{r \in [\log_{1-\eps} \left(\frac{\eps}{2}\right)+1]} |Z_G \cap C_r(\alpha)| \leq \sum_{r \in [\log_{1-\eps} \left(\frac{\eps}{2}\right)+1]} |G \cap H \cap C_r(\alpha)| = |G \cap H|.$$ The first inequality holds since $Z_G$ is a substitution of $G$. The last equality follows from Lemma~\ref{claim:RR}. We conclude that $Z_G$ is a replacement of $G$ such that $Z_G \subseteq R$; thus, $R$ is a representative set by Definition~\ref{def:Representatives}. By Step~\ref{step:Br} of Algorithm~\ref{alg:representative} and Definition~\ref{def:matroids}, it holds that $|R \cap C_r(\alpha)| \leq q(\eps)$; thus, it follows that $|R| \leq \left( \log_{1-\eps} \left(\frac{\eps}{2}\right)+1\right) \cdot q(\eps)$. %

 We now bound the time complexity of Algorithm \textsf{FindRep}. Step~\ref{step:APP} can be computed in time $\textnormal{poly}(|\ck|)$ using a PTAS for BMI with
 an error parameter
 $\eps= \frac{1}{2}$ (see, e.g.,~\cite{GZ10}). In addition, a membership oracle for the matroid in Step~\ref{step:Br} can be implemented in time $\textnormal{poly}(|\ck|)$ by Definition~\ref{def:matroids}. Finally, the basis in Step~\ref{step:Br} can be computed in time $\textnormal{poly}(|\ck|)$ using a greedy matroid basis minimization algorithm \cite{cormen2022introduction}. Hence, the running time of the algorithm is $\textnormal{poly}(|\ck|)$. \qed%

			\section{Discussion}
\label{sec:discussion}
In this paper we showed that the budgeted matroid independent set problem admits an EPTAS, thus improving upon the previously known  schemes for the problem. The speed-up is 
achieved by replacing the exhaustive enumeration used by existing algorithms \cite{GZ10,BBGS11,CVZ11,FKS21} with efficient enumeration
over subsets of a {\em representative set} whose size depends  only on $1/\eps$.

The representative set found by our algorithm is a minimum cost matroid basis for a union matroid. The union matroid is a union of a  matroid for each  {\em profit class} in the given instance. 
The basis itself can be easily found using a simple 
greedy procedure. The correctness relies on matroid exchange properties, optimality properties of minimal cost bases and a ``profit gap'' 
obtained by discarding a subset of elements from an optimal solution.

We note that our EPTAS, which directly exploits structural properties of our problem, already achieves substantial improvement over schemes developed for generalizations of BMI.
Furthemore, we almost resolve the complexity status of the problem  w.r.t approximation schemes. The existence of an FPTAS remains open.  

The notion 
of representative sets can potentially be used to replace exhaustive enumeration in the  PTASs for  {\em multiple knapsack with matroid constraint}  \cite{FKS21} 
and {\em budgeted matroid intersection} \cite{BBGS11}, 
leading to EPTASs for both problems.
It seems that the construction of a representative set, as well as the ideas behind the main lemmas, can be adapted to the setting of the  multiple knapsack with matroid problem. However, to derive an EPTAS for budgted matroid intersection, one needs to generalize first the 
concept of representative set, so it can be applied to matroid intersection. We leave this generalization as an interesting direction for follow-up work.

		\bibliographystyle{splncs04}
	
		\bibliography{bibfile}

\begin{thebibliography}{10}
\providecommand{\url}[1]{\texttt{#1}}
\providecommand{\urlprefix}{URL }
\providecommand{\doi}[1]{https://doi.org/#1}

\bibitem{BBGS11}
Berger, A., Bonifaci, V., Grandoni, F., Sch{\"a}fer, G.: Budgeted matching and
  budgeted matroid intersection via the gasoline puzzle. Mathematical
  Programming  \textbf{128}(1),  355--372 (2011)

\bibitem{CKPP00}
Caprara, A., Kellerer, H., Pferschy, U., Pisinger, D.: Approximation algorithms
  for knapsack problems with cardinality constraints. European Journal of
  Operational Research  \textbf{123}(2),  333--345 (2000)

\bibitem{CK05}
Chekuri, C., Khanna, S.: A polynomial time approximation scheme for the
  multiple knapsack problem. SIAM Journal on Computing  \textbf{35}(3),
  713--728 (2005)

\bibitem{CVZ11}
Chekuri, C., Vondr{\'a}k, J., Zenklusen, R.: Multi-budgeted matchings and
  matroid intersection via dependent rounding. In: Proceedings of the
  twenty-second annual ACM-SIAM symposium on Discrete Algorithms. pp.
  1080--1097. SIAM (2011)

\bibitem{cormen2022introduction}
Cormen, T.H., Leiserson, C.E., Rivest, R.L., Stein, C.: Introduction to
  algorithms. MIT press (2022)

\bibitem{DW17}
Das, S., Wiese, A.: On minimizing the makespan when some jobs cannot be
  assigned on the same machine. In: 25th Annual European Symposium on
  Algorithms, {ESA}. pp. 31:1--31:14 (2017)

\bibitem{FKS21}
Fairstein, Y., Kulik, A., Shachnai, H.: Modular and submodular optimization
  with multiple knapsack constraints via fractional grouping. In: 29th Annual
  European Symposium on Algorithms, {ESA} 2021, September 6-8, 2021, Lisbon,
  Portugal (Virtual Conference). pp. 41:1--41:16 (2021)

\bibitem{Jansen_et_al:2019}
Grage, K., Jansen, K., Klein, K.M.: An {EPTAS} for machine scheduling with
  bag-constraints. In: The 31st ACM Symposium on Parallelism in Algorithms and
  Architectures. pp. 135--144 (2019)

\bibitem{GZ10}
Grandoni, F., Zenklusen, R.: Approximation schemes for multi-budgeted
  independence systems. In: European Symposium on Algorithms. pp. 536--548.
  Springer (2010)

\bibitem{GLS93}
Gr{\"o}tschel, M., Lov{\'a}sz, L., Schrijver, A.: Geometric algorithms and
  combinatorial optimization, vol.~2. Springer Science \& Business Media (1993)

\bibitem{HW19}
Heydrich, S., Wiese, A.: Faster approximation schemes for the two-dimensional
  knapsack problem. {ACM} Trans. Algorithms  \textbf{15}(4),  47:1--47:28
  (2019)

\bibitem{IK75}
Ibarra, O.H., Kim, C.E.: Fast approximation algorithms for the knapsack and sum
  of subset problems. Journal of the ACM (JACM)  \textbf{22}(4),  463--468
  (1975)

\bibitem{jansen2010parameterized}
Jansen, K.: Parameterized approximation scheme for the multiple knapsack
  problem. SIAM Journal on Computing  \textbf{39}(4),  1392--1412 (2010)

\bibitem{J12}
Jansen, K.: A fast approximation scheme for the multiple knapsack problem. In:
  {SOFSEM} 2012: Theory and Practice of Computer Science - 38th Conference on
  Current Trends in Theory and Practice of Computer Science,
  {\v{S}}pindler{\r{u}}v Ml{\'{y}}n, Czech Republic, January 21-27, 2012.
  Proceedings. pp. 313--324 (2012])

\bibitem{JM19}
Jansen, K., Maack, M.: An {EPTAS} for scheduling on unrelated machines of few
  different types. Algorithmica  \textbf{81}(10),  4134--4164 (2019)

\bibitem{JSW21}
Jansen, K., Sinnen, O., Wang, H.: An {EPTAS} for scheduling fork-join graphs
  with communication delay. Theor. Comput. Sci.  \textbf{861},  66--79 (2021)

\bibitem{KPP04}
Kellerer, H., Pferschy, U., Pisinger, D.: Knapsack problems. Springer (2004)

\bibitem{KS10}
Kulik, A., Shachnai, H.: There is no {EPTAS} for two-dimensional knapsack.
  Information Processing Letters  \textbf{110}(16),  707--710 (2010)

\bibitem{La79}
Lawler, E.L.: Fast approximation algorithms for knapsack problems. Math. Oper.
  Res.  \textbf{4}(4),  339--356 (1979)

\bibitem{PT98}
Pisinger, D., Toth, P.: Knapsack problems. In: Handbook of combinatorial
  optimization, pp. 299--428 (1998)

\bibitem{schrijver2003combinatorial}
Schrijver, A.: Combinatorial optimization: polyhedra and efficiency, vol.~24.
  Springer (2003)

\bibitem{SW01}
Schuurman, P., Woeginger, G.J.: Approximation schemes-a tutorial. Lectures on
  Scheduling (to appear)  (2001)

\bibitem{ST18}
Shachnai, H., Tamir, T.: Polynomial time approximation schemes. In: Handbook of
  Approximation Algorithms and Metaheuristics, Second Edition, Volume 1:
  Methologies and Traditional Applications, pp. 125--156 (2018)

\bibitem{W99}
Woeginger, G.J.: When does a dynamic programming formulation guarantee the
  existence of an {FPTAS}? In: Proceedings of the Tenth Annual {ACM-SIAM}
  Symposium on Discrete Algorithms, 17-19 January 1999, Baltimore, Maryland,
  {USA}. pp. 820--829 (1999)

\end{thebibliography}
		
\appendix
\section{Solving the Linear Program}
\label{sec:LP}

In this section we show how $\textnormal{LP}(\ck,F,\alpha)$ can be solved in polynomial time, thus proving Lemma~\ref{lem:LPrunning}.

The main idea is to show that the feasibility domain of  $\textnormal{LP}(\ck,F,\alpha)$  is a  {\em well described} polytope, and that a {\em separation oracle} for it can be implemented in polynomial time.  Thus, using a known connection between separation and optimization, we obtain a polynomial-time algorithm which solves $\textnormal{LP}(\ck,F,\alpha)$. Our initial goal, however, is to show that $\textnormal{LP}(\ck,F,\alpha)$  is indeed a linear program, for which we need a characterization of the matroid polytope via linear inequalities.

Let $\cN=(\Omega, \cJ)$ be a matroid. The {\em matroid rank function of $\cN$}, $\rank_{\cN}:2^{\Omega} \to \mathbb{N}$, is defined by $\rank_{\cN}(S) = \max \left\{ |T|~ \middle|~T\in \cJ,~T\subseteq S\right\}$. 
That is, $\rank_{\cN}(S)$ is the maximal size of an independent set which is also a subset of $S$. The rank function is used in a characterization of a matroid polytope $P_{\cN}$ via a system of linear inequalities.
\begin{theorem}[Corollary 40.2b in \cite{schrijver2003combinatorial}] 
	\label{thm:poly_char}
	For any matroid $\cN=(\Omega, \cJ)$, it holds that
	$$P_{\cN} = \left\{ \bx \in \mathbb{R}_{\geq 0 }~\middle|~\forall S\subseteq \Omega:~~\sum_{\omega\in S} \bx_{\omega }\leq \rank_{\cN} (S)  \right\}.$$
\end{theorem}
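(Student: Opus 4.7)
The plan is to prove the two set inclusions of the displayed equality separately. Write $Q = \{\bx \in \mathbb{R}_{\geq 0}^{\Omega} : \sum_{\omega \in S} \bx_\omega \leq \rank_\cN(S) \text{ for all } S \subseteq \Omega\}$. For the easy containment $P_\cN \subseteq Q$, I would note that for any independent set $B \in \cJ$ the hereditary property gives $B \cap S \in \cJ$, so $\sum_{\omega \in S} \one^B_\omega = |B \cap S| \leq \rank_\cN(S)$, and clearly $\one^B \geq 0$. Each indicator vector defining $P_\cN$ thus lies in $Q$, and since $Q$ is cut out by linear inequalities it is convex, so the whole convex hull $P_\cN$ is contained in $Q$.

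For $Q \subseteq P_\cN$, my approach is the classical Edmonds LP-duality argument: for every objective $w : \Omega \to \mathbb{R}$, show that $\max\{w^\top \bx : \bx \in Q\} = \max\{w(I) : I \in \cJ\}$. Combined with $P_\cN \subseteq Q$, the equality of these LP optima for every $w$ forces $P_\cN = Q$ via a separating-hyperplane argument. After reducing to $w \geq 0$ (zeroing out negatively-weighted coordinates keeps the point in $Q$ and can only increase $w^\top \bx$), sort $w(\omega_1) \geq \cdots \geq w(\omega_n) \geq w(\omega_{n+1}) := 0$, let $S_i = \{\omega_1, \dots, \omega_i\}$, and let $I_g$ be the greedy independent set built from this order. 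The LP dual has variables $y_S$ minimizing $\sum_S y_S \rank_\cN(S)$ subject to $\sum_{S \ni \omega} y_S \geq w(\omega)$ and $y \geq 0$. I would set $y_{S_i} = w(\omega_i) - w(\omega_{i+1}) \geq 0$ on the prefix chain and zero elsewhere; telescoping makes this dual-feasible, and the dual objective rearranges via Abel summation to $\sum_i w(\omega_i)\bigl(\rank_\cN(S_i) - \rank_\cN(S_{i-1})\bigr) = w(I_g)$. Together with weak duality and the primal feasibility of $\one^{I_g}$, this pins down both optima at $w(I_g)$.

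The main obstacle is the greedy-rank identity $\rank_\cN(S_i) - \rank_\cN(S_{i-1}) = \mathbbm{1}[\omega_i \in I_g]$ used to collapse the dual objective to $w(I_g)$. This is the substantive matroid input and follows from a standard exchange argument: inductively, the greedy set after step $i-1$ is a maximum independent subset of $S_{i-1}$ of size $\rank_\cN(S_{i-1})$, and adding $\omega_i$ preserves independence iff it extends to a maximum independent subset of $S_i$, iff $\rank_\cN(S_i) > \rank_\cN(S_{i-1})$. With this identity in hand, LP duality delivers the coincidence of LP and combinatorial optima for every objective, and the separating-hyperplane argument then upgrades this to the set equality $Q = P_\cN$.
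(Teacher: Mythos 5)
The paper does not prove this statement; it is cited verbatim as Corollary 40.2b of Schrijver and used as a black box, so there is no in-paper argument to compare against. Your proof is correct and is essentially the textbook Edmonds argument that underlies that corollary: the easy inclusion $P_\cN \subseteq Q$ via heredity and convexity, and the hard inclusion via LP duality — reducing to $w \geq 0$, building the chain dual $y_{S_i} = w(\omega_i) - w(\omega_{i+1})$, Abel summation, and the greedy-rank identity $\rank_\cN(S_i) - \rank_\cN(S_{i-1}) = \one[\omega_i \in I_g]$ — followed by the separating-hyperplane step to upgrade equality of support functions to equality of the (bounded, closed) polytopes. All the steps are sound, including the observations that $Q$ is bounded (the singleton constraints give $0 \leq \bx_\omega \leq 1$) and that $\max_{\bx \in P_\cN} w^\top \bx = \max_{I \in \cJ} w(I)$ because the maximum of a linear function over a convex hull is attained on a generator.
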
 

Let $\ck=(E,\II, c,p,\beta)$ be a BMI instance, $F\subseteq E$ and $\frac{\OPT(\ck)}{2} \leq \alpha \leq \OPT(\ck)$. By Theorem~\ref{thm:poly_char} it holds that \eqref{LP} is equivalent to the following linear program.
	\begin{alignat}{4}
		\textnormal{LP}(\ck,F,\alpha)~~~~~~~ \max\quad       ~~ & \sum_{e \in E(\alpha)\setminus F} \bar{x}_{e}   \cdot p(e)  &&        \nonumber                                                \\
		\textsf{s.t.\quad} & \sum_{e \in E(\alpha)\setminus F} \bar{x}_{e}  \cdot c(e)         \leq  \beta-c(F) ~~~~~	\label{LP:budget}           \\
		&\sum_{e\in S} \bx_e \leq \rank_{\cm_F(\alpha)} (S) &&\forall S\subseteq E(\alpha)\setminus F \label{LP:rank} \\
		&\bar{x}_e \geq 0&&\forall e \in E(\alpha) \setminus F\label{LP:nonneg}
\end{alignat}  

We follow the definitions and techniques presented in \cite{GLS93}.
We say a polytope  $P\subseteq \mathbb{R}^n$ is of {\em facet complexity $\varphi$} if it is the solution set of a system of linear inequalities with rational coefficients, and the encoding length of each inequality in the system is at most $\varphi$. The following is an immediate consequence of the representation of  $\textnormal{LP}(\ck,F,\alpha)$ as a linear program.
\begin{observation}
	\label{obs:facet_comp}
The feasibility region of 
$\textnormal{LP}(\ck,F,\alpha) $ is of facet complexity polynomial in $|\ck|$.  
\end{observation}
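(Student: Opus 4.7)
The plan is to bound, one family at a time, the encoding length of each inequality in the system \eqref{LP:budget}--\eqref{LP:nonneg} that describes the feasibility region of $\textnormal{LP}(\ck,F,\alpha)$, and then take the maximum. Since this region is given explicitly by these three families, having replaced the abstract constraint $\bar{x} \in P_{\cm_F(\alpha)}$ by the rank inequalities from Theorem~\ref{thm:poly_char}, once each family's per-inequality encoding length is bounded by $\textnormal{poly}(|\ck|)$ the claimed facet complexity follows directly from the definition.

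For the single budget inequality \eqref{LP:budget}, the coefficients are the costs $c(e)$ for $e \in E(\alpha) \setminus F$ and the right-hand side is $\beta - c(F)$. All of $c(e)$, $\beta$, and $c(F) = \sum_{e \in F} c(e)$ are rationals appearing in (or obtained as sums of at most $|E|$ rationals from) the input, so their encoding lengths are polynomial in $|\ck|$. For any rank inequality in \eqref{LP:rank} indexed by $S \subseteq E(\alpha) \setminus F$, the coefficients are the $0/1$ indicators of membership in $S$, contributing encoding length $O(|E|)$, while the right-hand side $\rank_{\cm_F(\alpha)}(S)$ is an integer in $\{0,1,\ldots,|E|\}$ and therefore has encoding length $O(\log |E|)$. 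Each non-negativity inequality in \eqref{LP:nonneg} has constant encoding length. Taking the maximum over the three families yields the desired $\textnormal{poly}(|\ck|)$ bound.

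I do not anticipate any real obstacle: the statement is essentially a bookkeeping exercise once Theorem~\ref{thm:poly_char} is invoked to rewrite the matroid polytope constraint as a finite system of rank inequalities. The only mild subtlety is to verify that $\rank_{\cm_F(\alpha)}(S)$ is an integer of magnitude at most $|E|$, which is immediate from its definition as the maximum cardinality of an independent subset of $S$ in $\cm_F(\alpha)$, whose ground set is contained in $E$.
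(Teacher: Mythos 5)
Your proposal is correct and follows the same reasoning the paper leaves implicit: the paper simply declares the observation an "immediate consequence" of writing $\textnormal{LP}(\ck,F,\alpha)$ explicitly via the rank inequalities of Theorem~\ref{thm:poly_char}, and you have spelled out the routine per-constraint encoding-length bookkeeping that justifies that claim.
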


A separating hyperplane between a polytope $P\subseteq \mathbb{R}^n$ and $\bx \in \mathbb{R}^n$ is  vector $\bc\in \mathbb{R}^n$ such that $\bc \cdot \bx > \max_{\by\in P } \bc\cdot \by$, where $\bc \cdot \by$ is the inner product  of $\bc$ and $\by$. With a slight abuse of notation, we say that the constraint $\bc \cdot \bz \leq L$, where $\bc \in \mathbb{R}^n$ and $L\in \mathbb{R}$, is a separation hyperplane between $P$ and $\bx$ if $\bc \cdot \bx >L$ and $\bc \cdot \by\leq L$ for every $\by \in P$. 
A {\em separation oracle} for a polytope $P\subseteq \mathbb{R}^n$ is an oracle which receives $\bx \in \mathbb{R}^n$ as input, and either determines that $\bx\in P$ or returns a separating hyperplane between $P$ and $\bx$.  We use the following known connection between separation and optimization.
\begin{theorem}[Thm 6.4.9 and Remark	6.5.2 in \cite{GLS93}]
	\label{thm:ellipsoid} There is an algorithm which given $n,\varphi\in \mathbb{N}$,  $\bc \in \mathbb{R}^n$ and a separation oracle for a non-empty polytope $P\subseteq \mathbb{R}^n$ of facet complexity $\varphi$, returns a vertex $\by$ of $P$ such that $\bc\cdot \by =\max_{\bx \in P} \bc\cdot \bx$ in time polynomial in $n$ and $\varphi$. 
\end{theorem}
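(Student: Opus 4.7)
The plan is to prove the theorem in two stages. First, I would use the ellipsoid method together with the given separation oracle to compute the optimal value $L^{*} = \max_{\bx \in P} \bc \cdot \bx$ and a rational point of $P$ attaining it. Second, I would round to a vertex of the optimal face using a lexicographic perturbation of the objective.

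For the optimization stage I would run the ellipsoid algorithm with the sliding-objective variant. The facet complexity bound $\varphi$ gives an a priori bound of $2^{\text{poly}(n,\varphi)}$ on the norm of every vertex of $P$ and a matching lower bound on the volume of any full-dimensional face, so the procedure can start with an enclosing ellipsoid $E_0$ of known size. At iteration $k$, query the separation oracle at the center $\bx_k$ of $E_k$: if $\bx_k \notin P$, use the returned hyperplane as the cut; if $\bx_k \in P$, use the objective-improvement cut $\bc \cdot \bx \geq \bc \cdot \bx_k$. Since each iteration shrinks the volume by a factor of $e^{-1/(2n)}$, after $\text{poly}(n,\varphi)$ iterations the best feasible iterate is within $2^{-\text{poly}(n,\varphi)}$ of the optimum, and a continued-fraction rounding step recovers the exact rational value $L^{*}$ together with a rational optimizer $\bx^{*}$.

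For the vertex extraction stage I would use lexicographic perturbation: replace $\bc$ by $\bc' = \bc + (\eps, \eps^{2}, \ldots, \eps^{n})$ for a rational $\eps = 2^{-\text{poly}(n,\varphi)}$. Because every vertex of $P$ has rational coordinates of bit complexity polynomial in $n, \varphi$, for $\eps$ sufficiently small no two distinct vertices share the same $\bc'$-value, so any $\bc'$-optimizer must be a vertex. Choosing $\eps$ below the minimum positive gap between distinct $\bc$-values at vertices of $P$ (again bounded below by $2^{-\text{poly}(n,\varphi)}$) additionally forces this vertex to be $\bc$-optimal. Applying the first stage with $\bc'$ in place of $\bc$ therefore returns the desired vertex $\by$.

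The main obstacle is the bit-complexity bookkeeping throughout. The ellipsoid centers drift into irrational values unless each update is truncated carefully to $\text{poly}(n,\varphi)$ bits; if $P$ is not full-dimensional, the standard remedy is to relax to an enlargement $P^{\delta} \supseteq P$ with slack $\delta = 2^{-\text{poly}(n,\varphi)}$, run the ellipsoid method on $P^{\delta}$, and then project back via a simultaneous Diophantine approximation step; and the perturbation $\eps$ must be simultaneously small enough to yield a unique vertex optimum and large enough to remain representable in polynomially many bits. Balancing these parameters and proving that the Diophantine rounding recovers the intended rational exactly is the technical heart of the argument, so my plan essentially reproduces the GLS framework of their Sections~6.4--6.5 with the sliding-objective and lex-perturbation refinements as the core ideas.
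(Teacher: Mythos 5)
The paper never proves this statement—it is imported as a black box (Theorem 6.4.9 and Remark 6.5.2 of \cite{GLS93}), so there is no internal argument to compare against; your job here is really to reconstruct the cited result. Your sketch—sliding-objective ellipsoid driven by the separation oracle with the $2^{\textnormal{poly}(n,\varphi)}$ vertex-norm and volume bounds, exact recovery of the rational optimum via continued-fraction/simultaneous Diophantine rounding, and a lexicographic perturbation of $\bc$ with $\eps = 2^{-\textnormal{poly}(n,\varphi)}$ to isolate an optimal vertex—is exactly the GLS machinery behind the citation, and the outline (including your handling of low-dimensional $P$ via the enlargement $P^{\delta}$) is sound.
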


Thus, to show 	$\textnormal{LP}(\ck,F,\alpha)$ can be solved in polynomial time, we need to show that a separation oracle for $\textnormal{LP}(\ck,F,\alpha)$  can be implemented in polymomial time. To this end, we use a known result for matroid polytopes.
\begin{theorem}[Thm 40.4 in \cite{schrijver2003combinatorial}]
	\label{thm:matroid_separation}
There is a polynomial time algorithm $\textsf{MatroidSeparator}$ which given a subset of elements $\Omega$, a membership oracle for a matroid $\cN=(\Omega,\cJ)$ and $\bx \in \mathbb{Q}_{\geq 0}^\Omega$ either determines that $\bx \in P_{\cN}$ or returns  $S\subseteq \Omega$ such that $\sum_{\omega \in S} \bx_{\omega} >\rank_{\cN}(S)$. 
	\end{theorem}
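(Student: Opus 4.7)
My plan is to reduce separation over the matroid polytope $P_{\cN}$ to submodular function minimization. Define the set function $f: 2^\Omega \to \mathbb{Q}$ by $f(S) = \rank_{\cN}(S) - \sum_{\omega \in S} \bx_\omega$. Since the matroid rank function is submodular and $S \mapsto \sum_{\omega \in S} \bx_\omega$ is modular, $f$ is submodular. Combining the assumption $\bx \geq 0$ with the inequality description of $P_{\cN}$ in Theorem~\ref{thm:poly_char}, we have $\bx \in P_{\cN}$ if and only if $f(S)\geq 0$ for every $S \subseteq \Omega$. Hence separation is equivalent to computing $\min_{S\subseteq \Omega} f(S)$: a nonnegative minimum certifies $\bx \in P_{\cN}$, whereas any minimizer $S^*$ with $f(S^*)<0$ satisfies $\sum_{\omega\in S^*} \bx_\omega > \rank_{\cN}(S^*)$, providing exactly the required violated inequality.

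To execute this reduction in polynomial time, I would first construct an evaluation oracle for $f$. The modular term is read directly from the input $\bx$, so the only nontrivial ingredient is computing $\rank_{\cN}(S)$ from the membership oracle for $\cJ$. This is accomplished by the familiar greedy procedure: iterate over $\omega \in S$ in an arbitrary order, maintain an independent set $I \subseteq S$, and add $\omega$ to $I$ whenever $I+\omega \in \cJ$. A standard matroid-exchange argument (analogous in spirit to Observation~\ref{ob:matroid}) shows that the resulting $I$ has size $\rank_{\cN}(S)$ after at most $|S|$ oracle queries. Equipped with this oracle, I would invoke a polynomial-time algorithm for submodular function minimization---for instance the ellipsoid-based method of Gr\"otschel, Lov\'asz and Schrijver---to compute $S^* \in \argmin_{S\subseteq \Omega} f(S)$ in time polynomial in $|\Omega|$ and the bit-length of $\bx$. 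The algorithm $\textsf{MatroidSeparator}$ then either declares ``$\bx \in P_{\cN}$'' (when $f(S^*)\geq 0$) or returns the set $S^*$.

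The principal obstacle is the deep classical guarantee that an arbitrary submodular function given by an evaluation oracle can be minimized in time polynomial in the input bit-length and the oracle cost; I would invoke this result as a black box, in the same spirit as the paper's use of Theorem~\ref{thm:ellipsoid}. Once this ingredient is granted, the remaining pieces---the submodularity of $\rank_{\cN}$, the greedy rank computation from the membership oracle, and the inequality description of the matroid polytope---combine routinely into the desired polynomial-time separation routine.
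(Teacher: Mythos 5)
Your argument is correct, and it gives a legitimate polynomial-time separation oracle. The reduction is clean: $f(S)=\rank_{\cN}(S)-\sum_{\omega\in S}\bar{x}_\omega$ is submodular (rank is submodular, $\bar{x}$ is modular), and combined with $\bar{x}\geq 0$ and the inequality description of $P_{\cN}$, membership is exactly $\min_S f(S)\geq 0$, with any negative-valued minimizer certifying a violated rank inequality. The evaluation oracle for $f$ via greedy rank computation from the membership oracle is standard and uses $|S|$ membership queries, and the values of $f$ have bit-length polynomial in $|\Omega|$ and $\langle\bar{x}\rangle$, so a weakly polynomial SFM algorithm (GLS via the Lov\'asz extension, or later combinatorial algorithms of Schrijver and of Iwata--Fleischer--Fujishige) runs in the required time. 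There is no circularity: SFM with a value oracle does not presuppose a separation oracle for $P_{\cN}$.

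That said, you take a heavier and more generic route than the source the paper actually cites. Schrijver's Theorem 40.4 does not invoke general submodular function minimization (which appears much later in the book); it relies on the combinatorial machinery specific to matroids developed in that chapter---essentially the matroid union/intersection algorithm in the spirit of Cunningham's algorithm for testing membership in matroid polyhedra---to find a most-violated rank inequality directly. The trade-off: the textbook argument is self-contained within matroid theory and yields a strongly polynomial bound tied to matroid intersection, whereas your reduction is shorter to state and transparently modular, at the cost of importing the full strength of polynomial-time SFM as a black box. Both are valid; for the purposes of this paper, where the theorem is used only as a citation, either suffices.
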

A polynomial-time implementation of a separation oracle for 	$\textnormal{LP}(\ck,F,\alpha)$ is now straightforward.
\begin{lemma}
	\label{lem:separation}
There is an algorithm which given a BMI instance  $\ck=(E,\II, c,p,\beta)$,  $F\subseteq E$, $\frac{\OPT(\ck)}{2} \leq \alpha \leq \OPT(\ck)$ and $\bx\in \mathbb{R}^{E}$ implements a separation oracle for the feasibility region of 	$\textnormal{LP}(\ck,F,\alpha)$  in polynomial time.
\end{lemma}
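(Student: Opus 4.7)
The plan is to check the three families of constraints defining the feasibility region of $\textnormal{LP}(\ck,F,\alpha)$ one at a time: the non-negativity constraints \eqref{LP:nonneg}, the budget constraint \eqref{LP:budget}, and the rank constraints \eqref{LP:rank}. The first two are trivial to evaluate: I scan $\bx$ and, if some $\bx_e<0$ for $e \in E(\alpha)\setminus F$, return $-\bx_e \leq 0$ as a separating hyperplane; otherwise I compute the single inner product $\sum_{e \in E(\alpha)\setminus F} \bx_e \cdot c(e)$ and return \eqref{LP:budget} as a separating hyperplane if this sum exceeds $\beta - c(F)$.

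The main step is handling the exponentially many rank constraints \eqref{LP:rank}, for which I invoke the algorithm $\textsf{MatroidSeparator}$ from Theorem~\ref{thm:matroid_separation} on the matroid $\cm_F(\alpha)$ and the restriction of $\bx$ to the coordinates indexed by $E(\alpha)\setminus F$. To apply this theorem I need a polynomial-time membership oracle for $\cm_F(\alpha)$. This is readily available since, by Definition~\ref{def:matroids}, a set $A \subseteq E(\alpha)$ belongs to $\cI_F(\alpha)$ if and only if $A \cup F \in \cI$, which can be tested in polynomial time using the membership oracle for $(E,\cI)$ provided as part of the instance. Hence the call to $\textsf{MatroidSeparator}$ runs in polynomial time; if it certifies that $\bx \in P_{\cm_F(\alpha)}$ then all rank constraints are satisfied, and otherwise it returns $S \subseteq E(\alpha)\setminus F$ with $\sum_{e\in S}\bx_e > \rank_{\cm_F(\alpha)}(S)$, giving a violated rank constraint as a separating hyperplane.

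If all three checks pass then $\bx$ lies in the feasibility region; otherwise one of them has produced a valid separating hyperplane. Each check runs in time $\textnormal{poly}(|\ck|)$, so the entire separation oracle does as well, proving Lemma~\ref{lem:separation}. There is no real obstacle here: once Theorem~\ref{thm:poly_char} is used to rewrite the matroid polytope constraint $\bx \in P_{\cm_F(\alpha)}$ as the system \eqref{LP:rank}, the only nontrivial ingredient is the matroid separation subroutine of Theorem~\ref{thm:matroid_separation}, and the remaining two checks are routine. Combining this with Observation~\ref{obs:facet_comp} and Theorem~\ref{thm:ellipsoid} then yields the polynomial-time algorithm for $\textnormal{LP}(\ck,F,\alpha)$ claimed in Lemma~\ref{lem:LPrunning}.
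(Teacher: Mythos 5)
Your proof is correct and takes essentially the same approach as the paper: check the non-negativity and budget constraints directly, then delegate the exponentially many rank constraints to the matroid separation algorithm of Theorem~\ref{thm:matroid_separation}, returning whichever violated constraint is found as the separating hyperplane. You additionally spell out how the membership oracle for $\cm_F(\alpha)$ is implemented from the oracle for $(E,\cI)$, a detail the paper's proof leaves implicit.
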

\begin{proof}
To implement a separation oracle, the algorithm first checks if the input $\bx$ satisfies constraints~\eqref{LP:budget} and~\eqref{LP:nonneg}. If one of the constraints is violated, the algorithm returns the constraint as a separating hyperplane.

Next, the algorithm invokes $\textsf{MatroidSeparator}$  (Theorem~\ref{thm:matroid_separation}) with the matroid $\cm_{F} (\alpha)$ and the point $\bx$. If  $\textsf{MatroidSeparator}$ returns that $\bx\in P_{\cm_{F}(\alpha)}$ then the algorithm returns that $\bx$ is in the feasibility region. Otherwise, the $\textsf{MatroidSeparator}$ returns $S\subseteq E(\alpha)\setminus F$ such that $\sum_{e\in S} \bx_e > \rank_{\cm_F(\alpha)} (S) $; that is, a set $S$ for which \eqref{LP:rank} is violated. In this case, the algorithm returns $\mathbbm{1}^S$ as a separating hyperplane. 
Observe that for every $\by$ in the feasibility region of $\textnormal{LP}(\ck,F,\alpha)$, it holds that 
$$\mathbbm{1}^{S} \cdot \by = \sum_{e\in S} \by_e  \leq \rank_{\cm_F(\alpha)} (S)  < \mathbbm{1}^{S} \cdot \bx,$$
where the first inequality is by \eqref{LP:rank}. Thus, $\mathbbm{1}^S$ is indeed a separating hyperplane.  

Clearly, the separating hyperplanes returned by the algorithm are indeed separating hyperplanes. Furthermore, if the algorithm asserts that $\bx$ is in the feasibility region then constraints \eqref{LP:budget} and \eqref{LP:nonneg} hold as those were explicitly checked, and \eqref{LP:rank} holds by $\bx\in P_{\cm_F(\alpha)}$ (Theorem~\ref{thm:matroid_separation} and Theorem~\ref{thm:poly_char}). That is, $\bx$ is indeed in the feasibility region of $\textnormal{LP}(\ck,F,\alpha)$. The algorithm runs in polynomial-time as each of its steps can be implemented in polynomial-time.
\end{proof}

\begin{proof}[{\bf Proof of Lemma~\ref{lem:LPrunning}}] Observe that the vector $(0,\ldots,0)\in \mathbb{R}^{E(\alpha)}$ is in the feasibility region of $\textnormal{LP}(\ck,F,\alpha)$, and thus the feasibility region is not empty. By Theorem~\ref{thm:ellipsoid} and Observation~\ref{obs:facet_comp}, there is an  algorithm which finds an optimal  basic solution for  $\textnormal{LP}(\ck,F,\alpha)$ using a polynomial number of operations and calls for a separation oracle. By Lemma~\ref{lem:separation}, the separation oracle can be implemented in polynomial time as well. Thus, an optimal basic solution for $\textnormal{LP}(\ck,F,\alpha)$ can be found in time polynomial in $|\ck|$.  
	\end{proof}
	\end{document}